\newtheorem{definition}{Definition}[]
\newtheorem{theorem}{Theorem}[]
\newtheorem{lemma}{Lemma}[]
\def\BState{\State\hskip-\ALG@thistlm}
\definecolor{Gray}{gray}{0.9}
\definecolor{LightCyan}{rgb}{0.88,1,1}
\DeclareRobustCommand*\textsubscript[1]{%
	\@textsubscript{\selectfont#1}}
\def\@textsubscript#1{%
	{\m@th\ensuremath{_{\mbox{\fontsize\sf@size\z@#1}}}}}
\begin{document}
	
	\title{MG-WFBP: Merging Gradients Wisely for Efficient Communication in Distributed Deep Learning}
	
\author{Shaohuai Shi, \IEEEmembership{Member, IEEE,}
        Xiaowen Chu\IEEEauthorrefmark{1}\thanks{* Corresponding author.}, \IEEEmembership{Senior Member, IEEE,}
        and Bo Li, \IEEEmembership{Fellow, IEEE}
\IEEEcompsocitemizethanks{
	
	\IEEEcompsocthanksitem Shaohuai Shi and Bo Li are with the Department of Computer Science and Engineering, The Hong Kong University of Science and Technology, Kowloon, Hong Kong, China. 
	E-mail: \{shaohuais, bli\}@cse.ust.hk.
	
	\IEEEcompsocthanksitem Xiaowen Chu is with the Department of Computer Science, Hong Kong Baptist University, Kowloon, Hong Kong, China. 
	E-mail: chxw@comp.hkbu.edu.hk.}
	% <-this % stops a space
}

\IEEEtitleabstractindextext{%
\begin{abstract}
Distributed synchronous stochastic gradient descent has been widely used to train deep neural networks (DNNs) on computer clusters. With the increase of computational power, network communications generally limit the system scalability. Wait-free backpropagation (WFBP) is a popular solution to overlap communications with computations during the training process. In this paper, we observe that many DNNs have a large number of layers with only a small amount of data to be communicated at each layer in distributed training, which could make WFBP inefficient. Based on the fact that merging some short communication tasks into a single one can reduce the overall communication time, we formulate an optimization problem to minimize the training time in pipelining communications and computations. We derive an optimal solution that can be solved efficiently without affecting the training performance. We then apply the solution to propose a distributed training algorithm named merged-gradient WFBP (MG-WFBP) and implement it in two platforms Caffe and PyTorch. Extensive experiments in three GPU clusters are conducted to verify the effectiveness of MG-WFBP. We further exploit trace-based simulations of 4 to 2048 GPUs to explore the potential scaling efficiency of MG-WFBP. Experimental results show that MG-WFBP achieves much better scaling performance than existing methods. 
\end{abstract}
\begin{IEEEkeywords}
	Deep Learning; GPU; Distributed Stochastic Gradient Descent; Gradient Communication; Merged-gradient
\end{IEEEkeywords}
}
\maketitle
\IEEEdisplaynontitleabstractindextext

\section{Introduction}
The data-parallel synchronous stochastic gradient descent (S-SGD) method is commonly used as the optimizer to train large-scale deep neural networks (DNNs) \cite{dean2012large}\cite{goyal2017accurate}. In S-SGD, the computing tasks for each mini-batch of training data are distributed to a cluster of computing nodes, and the individual results (e.g., gradients) are aggregated to update the global network model before the next iteration begins. However, with more computing nodes and the fast-growing computing power of hardware accelerators, the data communication between computing nodes gradually becomes the performance bottleneck \cite{watcharapichat2016ako}\cite{cui2016geeps}\cite{shi2018adag}\cite{wang2019impact}. For example, the computing power of Nvidia GPUs has increased by 30x in the last 10 years, whilst it took about 15 years for the network speed to improve from 10Gbps to 100Gbps. Hence it becomes a critical issue to address the imbalance between computing and communication.

Some recent works try to reduce the impact of data communication at either algorithmic or system level. On one hand, gradients could be quantized or sparsified \cite{alistarh2017qsgd}\cite{lin2018deep}\cite{wen2017terngrad}\cite{shi2019adistributed}\cite{shi2019ijcai} in order to reduce the amount of data to be exchanged so that the communication time could be reduced. But these methods usually sacrifice the training convergence speed. On the other hand, the high-performance computing (HPC) community has proposed several methods to improve the communication performance of the cluster by optimizing the hardware or communication software library \cite{potluri2013efficient}\cite{chen2019roundrobin}. In terms of hardware, InfiniBand (IB) and Omni-Path networks can provide much higher communication bandwidth and lower latency, and are deployed to shorten the performance gap between communication and computation \cite{bayatpour2017scalable}. Regarding the software, the implementation of message passing interface (MPI) has been optimized to support more efficient communication in DNN training \cite{bayatpour2017scalable}\cite{awan2017s}. Nvidia's NCCL\footnote{\url{https://developer.nvidia.com/nccl}} is another highly optimized communication library for deep learning (DL) frameworks on multi-GPU settings. The scaling efficiency of distributed DL systems can be modeled as a function of the communication-to-computation ratio \cite{wen2017terngrad}. For example, training ResNet-50 \cite{he2016deep} requires about 7.8 billion floating point operations in computation, while it needs to all-reduce 102 MB of data in one iteration. Higher communication-to-computation ratio results in lower scaling efficiency. 

The layered structure of DNNs makes it possible to overlap the communication and computation during the backward propagation \cite{awan2017s}\cite{zhang2017poseidon}\cite{shi2018performance}, which is known as wait-free backpropagation (WFBP). WFBP begins to exchange the gradients of a layer immediately after they have been calculated; so if the data communication time of a layer is shorter than the computation time of the gradients of its previous layer, then this communication cost can be fully hidden. However, if very fast hardware accelerators are used while the network speed is relatively slow (i.e., a high communication-to-computation ratio), there can exist many layers whose communication time is longer than the corresponding computation time. In such case, it becomes important to optimize the communications. We observe that the layer-wise gradient communication in WFBP is suboptimal due to the fact that all-reducing a small amount of data cannot fully utilize the network bandwidth in current network topology due to the startup time of message transmitting (or transmission latency). For example, on our 10GbE platform, all-reducing a set of 200 KB vectors across 8 nodes using MPI requires about 1.5 ms, while all-reducing a set of 400 KB vectors only requires 1.8 ms, which means that if we merge the two sets of 200 KB vectors to a single set of 400 KB vectors, then the total communication time can be reduced from 3 ms to 1.8 ms. The same phenomena can also be found in RDMA-based networks \cite{handley2017re}\cite{guo2016rdma}. You et al. \cite{you2017scaling} have also noticed this problem, and proposed a single-layer communication (SyncEASGD) method which merges the gradients of different layers into a single tensor and then transfers only once per iteration. As compared to the layer-wise communication in WFBP, it can eliminate most of the startup time of data communications. But in their proposed method, gradient communication can only start after the backward propagation, thus it misses the opportunity of overlapping the communication with computation. 

We argue that the best way to reduce the training time needs to consider not only how to overlap communication with computation, but also how to improve the communication efficiency by avoiding transmitting small messages. According to the taxonomy of efficient distributed DL~\cite{tang2020communication,shi2020quantitative}, our proposed method belongs to a kind of scheduling solution to improve the scalability of distributed training. 

In this paper, we first formulate the communication scheduling problem in S-SGD as an optimization problem that aims to minimize the total training time of an iteration. We then propose a merged-gradient wait-free backward propagation (MG-WFBP) method and prove its optimality. The time complexity of MG-WFBP is $O(L^2)$ where $L$ is the number of layers (or tensors) in the DNN, and it only needs to be executed once before the whole training process. We implement MG-WFBP atop the popular DL frameworks Caffe \cite{jia2014caffe} and PyTorch\footnote{\url{https://pytorch.org}} \cite{pytorch2019}, and make it publicly available\footnote{https://github.com/HKBU-HPML/MG-WFBP}. To validate the effectiveness of our proposed MG-WFBP, we evaluate its performance using various DNNs on multi-GPU settings with both 10Gbps Ethernet (10GbE) and 56Gbps InfiniBand (56GbIB) interconnects. On the relatively slow Nvidia Tesla K80 GPU clusters with 10GbE, MG-WFBP achieves about $1.2$x to $1.36$x improvement than the state-of-the-art communication algorithms WFBP and SyncEASGD, respectively. On the latest Nvidia Tesla V100 GPU clusters with 10GbE or 56GbIB, MG-WFBP achieves an average of 18.8\% faster than WFBP and SyncEASGD in terms of end-to-end training time. To investigate its performance on large clusters, we resolve to trace-based simulation (due to limited hardware resources) on 4-worker to 2048-worker clusters. In the 64-worker simulation, the results show that MG-WFBP achieves more than $1.7$x and $1.3$x speedups compared to WFBP and SyncEASGD respectively. This paper is an extension of our previous conference publication \cite{shi2019mgwfbp}, and we make the following new contributions.
\begin{itemize}
	\item We provide a complete proof of the optimality of MG-WFBP.
	\item We implement MG-WFBP on PyTorch and also make it open-source.
	\item We conduct extensive experiments on two Nvidia V100 GPU clusters with 10Gbps Ethernet and 56Gbps InfiniBand interconnects using six DNNs. 
	\item We verify that MG-WFBP is also robust to mixed precision training which is widely used in latest Nvidia GPUs and Google TPUs. 
\end{itemize}

The rest of the paper is organized as follows. We present the preliminaries in Section \ref{s:pre}, followed by the formulation of the existing problem in Section \ref{s:profor}. We derive an optimal solution to the problem and then present our MG-WFBP algorithm in Section \ref{s:method}. The system implementation atop PyTorch is present in Section \ref{s:system}. Section \ref{s:eval} demonstrates the experimental studies on the proposed method compared to existing methods. Section \ref{s:relatedwork} introduces the related work. We discuss some limitations and possible directions in Section~\ref{s:discission}, and finally we conclude this paper in Section \ref{s:conclusion}.

\section{Preliminaries}\label{s:pre}

For ease of presentation, we summarize the frequently used mathematical notations in Table \ref{table:notation}.
\begin{table}[!ht]
	\centering
	\caption{Frequently used notations}
	\label{table:notation}
	\begin{tabular}{|l|l|}
		\hline
		Name &  Description \\\cline{1-2}
		\hline
		\hline
		$N$ & The number of computing nodes in the cluster. \\
		$\alpha$ & Latency (startup time) of the network between two nodes. \\
		$\beta$ & Transmission time per byte between two nodes. \\
		$\gamma$ & Summation time of two floating point numbers in one node. \\
		$a$ & Latency (startup time) of all-reduce.\\
		$b$ & Transmission and computation time per byte of all-reduce. \\
		$M$ & The size of a message in bytes. \\\cline{1-2}
		$W$ & Weights of the DNN. \\		
		$D_i^g$ & The input data size for the $g^{th}$ node at the $i^{th}$ mini-batch.\\\cline{1-2}
		$L$ & The number of learnable layers (or tensors) of a DNN.\\
		$p^{(l)}$ & The number of parameters in the learnable layer $l$.\\
		$t_{iter}$ & Time of one training iteration with one batch  of data.\\
		$t_{f}$ & Time of the forward pass in each iteration.\\
		$t_{b}$ & Time of the backward propagation in each iteration.\\
		$t_{u}$ & Time of the model update in each iteration.\\
		$t_{b}^{(l)}$ & Time of the backward propagation of layer $l$ in each iteration.\\
		$\tau_{b}^{(l)}$ & The timestamp when layer $l$ begins to calculate gradients.\\
		$\tau_{c}^{(l)}$ & The timestamp when layer $l$ begins to communicate gradients.\\
		$t_{c}$ & Time of gradient aggregation in each iteration.\\
		$t_{c}^{(l)}$ & Time of gradient aggregation of layer $l$ in each iteration.\\
		$t_{c}^{no}$ & The non-overlapped communication cost in each iteration.\\
\hline
	\end{tabular}
\end{table}
\subsection{Mini-batch SGD}

Consider an $L$-layer DNN with a loss function $\mathcal{L}(W,D)$ which defines the difference between the prediction values and the ground truth over the training data set $D$, where $W$ is the set of model weights. To minimize the loss function, the mini-batch SGD updates the parameters iteratively. Typically, the $i^{th}$ iteration of the training includes four steps: 1) A mini-batch of data $D_i$ ($D_i\subset D$) is read as inputs of the DNN. 2) $D_i$ is fed forward across the neural network from layer $1$ to layer $L$ to compute the prediction values, and finally the loss function $\mathcal{L}(W,D)$ is computed. 3) The first order gradients w.r.t. parameters and inputs are calculated and backpropagated from layer $L$ to layer $1$. 4) Finally, the parameters are updated with the layer-wise gradients. The training is terminated when some stopping criteria are satisfied. The update of $W$ can be formulated as follows:
\begin{equation}
W_{i+1}=W_{i}-\eta\cdot\nabla\mathcal{L}(W_{i},D_{i}),
\end{equation}
where $\eta$ is the learning rate of SGD, $W_{i}$ denotes the weights at the $i^{th}$ iteration, and $\nabla\mathcal{L}(W_{i},D_{i})$ denotes the gradients. The time consumed in the training process is mainly in steps 2 and 3, because step 1 of the $i^{th}$ iteration can be scheduled to overlap with the $(i-1)^{th}$ iteration, and the time of step 4 is negligible. Therefore, we can simplify the timeline of SGD as a forward pass followed by a backward pass. The time of one iteration is represented by $t_{iter}=t_f+t_b$, where $t_f$ is the time of the forward pass, and $t_b$ is the time of the backward pass.

\begin{figure}[!ht]
	\centering
	\begin{subfigure}{0.48\textwidth}
		\includegraphics[width=\linewidth]{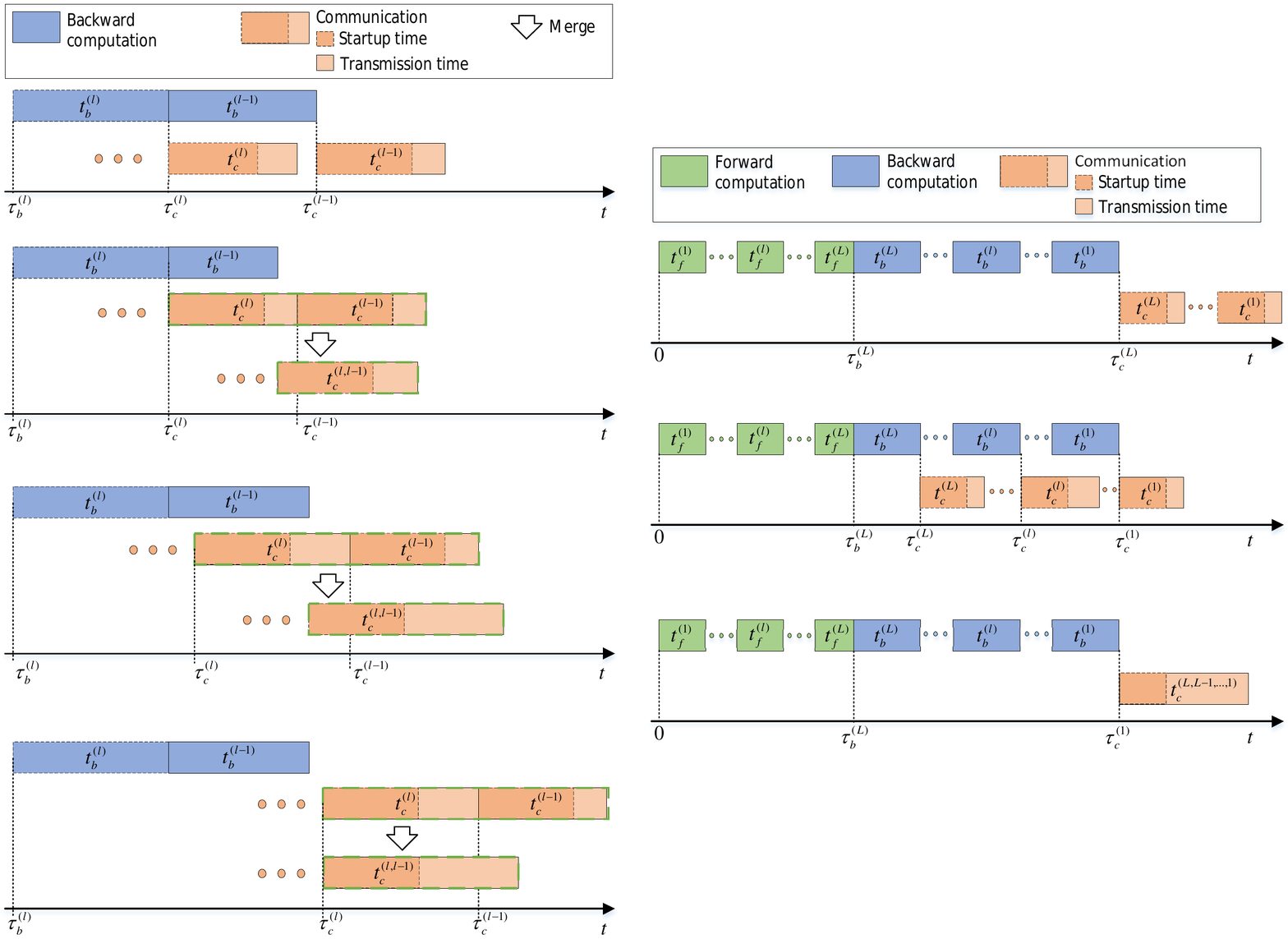}
		\caption{Naive S-SGD.}
	\end{subfigure}
	
	\begin{subfigure}{0.48\textwidth}
	   \vspace{10pt}
		\includegraphics[width=\linewidth]{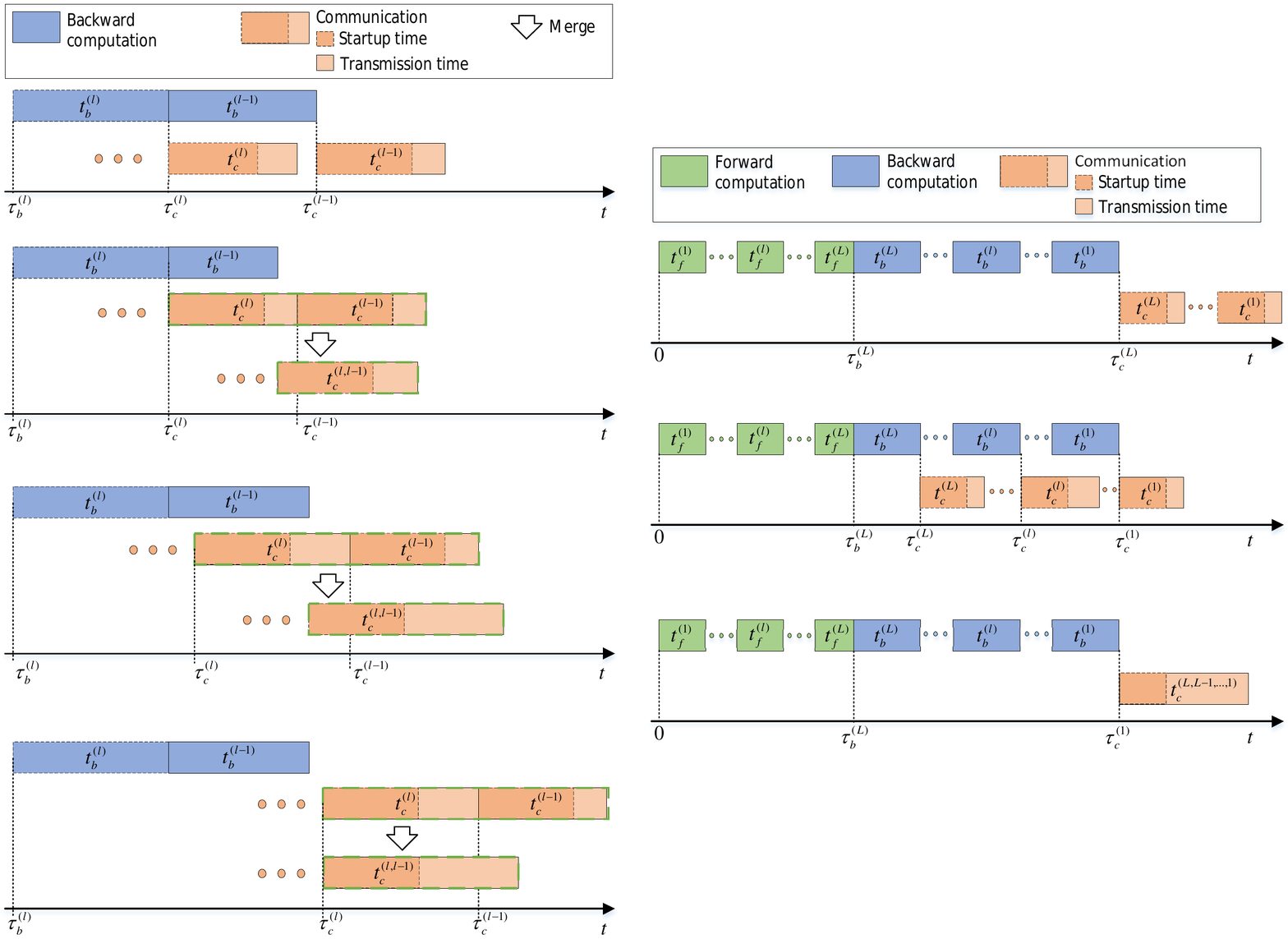}
		\caption{WFBP S-SGD.}
	\end{subfigure}
	
	\begin{subfigure}{0.48\textwidth}
	   \vspace{10pt}
		\includegraphics[width=\linewidth]{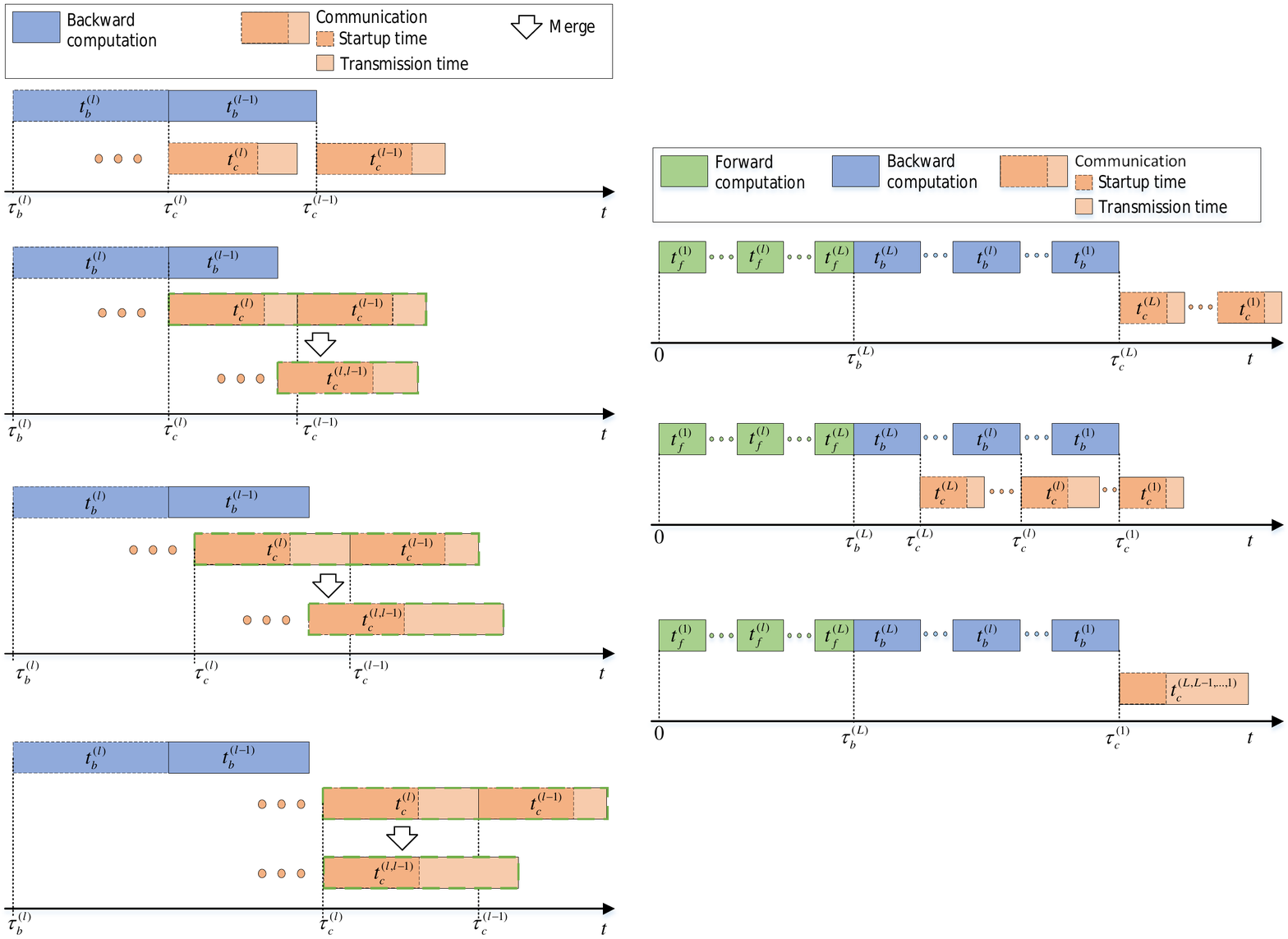}
		\caption{Single-layer S-SGD.}
	\end{subfigure}
	
	\caption{The timeline of the traditional S-SGD algorithms. (a) Naive S-SGD: Layer-wise gradient communications can only be started after all gradients have been calculated. (b) WFBP S-SGD (WFBP-SGD): Gradient communication of each layer begins immediately after the backward step of that layer. (c) SyncEASGD: All gradients are merged into a single-layer to be communicated together.}
	\label{fig:tranditionalSGDs}
\end{figure}

\subsection{Synchronized SGD}
For large-scale DNNs, the synchronized SGD (S-SGD) with data-parallelism is widely applied to train a model using multiple workers (say $N$ workers, and indexed by $g$). Each worker takes a different mini-batch of data $D_{i}^{g}$ and forwards it by step 2), and then follows step 3) to calculate the gradients $\nabla\mathcal{L}(W_{i},D_{i}^{g})$. In this way, each worker has a copy of the model, while the gradients calculated by different workers are different since the input data are different. At the end of each iteration of a mini-batch, S-SGD needs to average the gradients from different workers, updates the model by the averaged gradients, and synchronizes the model with all workers. The weights update formula of S-SGD is:
\begin{equation}\label{equ:ssgd}
W_{i+1}=W_{i}-\eta\cdot\frac{1}{N}\sum_{g=1}^{N}\nabla\mathcal{L}(W_{i},D_{i}^{g}).
\end{equation}
The averaging operation of gradients across the cluster involves extra computation and communication overheads. As a side-effect, it is not easy to achieve linear scaling in the distributed SGD training. The timeline of the naive S-SGD (i.e., computation and communication are not overlapped) with communication overheads is illustrated in Fig. \ref{fig:tranditionalSGDs}(a). The naive S-SGD algorithm suffers from the waiting period of data communication of model synchronization at every iteration. In practice, the gradients of a layer is stored as a tensor; hence the averaging process can be implemented by many all-reduce operations, once per layer. The layer-wise nature introduces many startup times for layer-wise gradients when they are communicated. The iteration time of the naive S-SGD can be estimated as
\begin{equation}\label{equ:tssgd}
t_{iter}=t_{f}+t_{b}+t_{c},
\end{equation}
where $t_{b}=\sum_{l=1}^{L}t_{b}^{(l)}$ is the layer-wise backward propagation time and $t_{c}=\sum_{l=1}^{L}t_{c}^{(l)}$ is the layer-wise gradient aggregation time which heavily relies on the communication performance.

Considering S-SGD running on $N$ workers, we define the speedup of S-SGD compared to the vanilla single-worker SGD:
\begin{equation}
S(N)=\frac{N|D_i^{g}|/(t_f+t_b+t_c)}{|D_i^{g}|/(t_f+t_b)}=\frac{N}{1+\frac{t_c}{t_f+t_b}},
\end{equation}
where $|D_i^{g}|$ is the number of training samples per worker at the $i^{th}$ iteration. Let $r=\frac{t_c}{t_f+t_b}$, which reflects the communication-to-computation ratio, we have
\begin{equation}\label{equ:speedupsync}
S(N)=\frac{N}{1+r}.
\end{equation}

\subsection{WFBP-SGD}
In WFBP S-SGD (WFBP-SGD), the gradient communication of layer $l$ ($l>1$) can be overlapped with the backward propagation of layer $l-1$. The timeline of WFBP-SGD is illustrated in Fig. \ref{fig:tranditionalSGDs}(b). For simplicity, we assume that the start timestamp of the forward pass is $0$, and the start timestamp of the backward pass is $\tau_b^{(L)} = t_f$. Then the timestamp when layer $l$ begins to calculate the gradients, denoted by $\tau_b^{(l)}$, can be calculated by:
\begin{equation}\label{equ:startcomp}
\tau_b^{(l)}=
\begin{cases}
t_f & l=L\\
\tau_b^{(l+1)}+t_b^{(l+1)} & 1\leq l<L
\end{cases}.
\end{equation}
Notice that the communication of gradients of layer $l$ ($l<L$) can only begin if the following two conditions are satisfied: (1) the gradients of layer $l$ have been calculated; (2) the communication of gradients of layer (l+1) has finished. So, the timestamp when layer $l$ begins the communication of gradients, denoted by $\tau_c^{(l)}$, can be calculated by:
\begin{equation}\label{equ:startt}
\tau_c^{(l)}=
\begin{cases}
\tau_b^{(l)}+t_b^{(l)} & l=L\\
\text{max}\{\tau_c^{(l+1)}+t_c^{(l+1)}, \tau_b^{(l)}+t_b^{(l)}\} & 1\leq l<L
\end{cases}.
\end{equation}
The iteration time of WFBP-SGD can be calculated as
\begin{equation}\label{equ:wfbpiter}
\begin{split}
t_{iter}&=t_f+t_b^{(L)}+(\tau_c^{(1)}-\tau_c^{(L)})+t_c^{(1)}\\
&=t_c^{(1)}+\text{max}\{\tau_c^{(2)}+t_c^{(2)}, \tau_b^{(1)}+t_b^{(1)}\}.
\end{split}
\end{equation}
Since some communications are overlapped with the computation, the non-overlapped communication cost, $t_{c}^{no}$, becomes the bottleneck of the system. In WFBP-SGD, we redefine $r=\frac{t_c^{no}}{t_f+t_b}$, so the main problem of WFBP-SGD is that when the communication cannot be fully overlapped by computation, i.e., $\tau_c^{(l+1)}+t_c^{(l+1)} >\tau_b^{(l)}+t_b^{(l)}$, $t_c^{no}$ will limit the system scalability.

\subsection{Single-Layer S-SGD}
As layer-wise communications introduce many startup times especially for large-scale clusters, the startup times dominate the communication time so that overlapping communications and computations may lead to even worse scaling efficiency. Therefore, You et al. \cite{you2017scaling} propose a single-layer communication mechanism (SyncEASGD) which merges all gradients to be communicated by a single all-reduce operation at the end of each iteration, as shown in Fig. \ref{fig:tranditionalSGDs}(c). The iteration time of SyncEASGD can be estimated as
\begin{equation}\label{equ:synceaiter}
t_{iter}=t_f+t_b+t_c,
\end{equation}
where $t_c$ is composed by the startup time and the transmission time.

\subsection{Communication Model}
In Eq. (\ref{equ:ssgd}), we use $\Delta W_i=\sum_{g=1}^{N}\nabla\mathcal{L}(W_{i},D_{i}^{g})$ to represent the aggregation of gradients from $N$ workers, which is an all-reduce operation\footnote{In this paper, we mainly discuss the scenario with the all-reduce collective, while our proposed method should also be applicable to the parameter server architecture.}. There are many optimized algorithms for the all-reduce operation with different number of processes and message sizes \cite{rabenseifner2004optimization}\cite{thakur2005optimization}\cite{hoefler2010toward}. To simplify the problem, we assume that the number of workers is power-of-two, and the peer-to-peer communication cost is modeled as $\alpha+\beta M$ \cite{sarvotham2001connection}, where $\alpha$ is the latency component (or called start-up time), $\beta$ is the communication time per byte, and $M$ is the message size. Without loss of generality, we do not limit the communication model to one specific algorithm. Given $N$ workers, the time cost of all-reduce can be generalized as
\begin{equation}\label{equ:tcomm}
T_{ar}(M)=a+b\times M,
\end{equation}
where $a$ and $b$ are two constants that are not dependent on $M$.
Some well optimized all-reduce algorithms are summarized in Table \ref{table:allreduce}.

\begin{table}[!ht]
		\centering
		\caption{Cost of different all-reduce algorithms}
		\label{table:allreduce}
		\addtolength{\tabcolsep}{-2.2pt}
		\begin{tabular}{|l|c|c|}
			\hline
			All-reduce Algorithm &  $a$ & $b$ \\\hline
			\hline
			Binary tree~\cite{rabenseifner2004optimization} & $2\alpha \log N$ & $(2\beta+\gamma)\log N$ \\\hline
			Recursive doubling~\cite{thakur2005optimization}& $\alpha \log N$ & $(\beta+\gamma)\log N$  \\\hline
			Recursive halving/doubling~\cite{thakur2005optimization}& $2\alpha \log N$ & $2\beta-\frac{1}{N}(2\beta+\gamma)+\gamma$ \\\hline
			Double binary trees~\cite{sanders2009two} & $2\log N$ & $\beta$+$\gamma$ \\\hline
			Ring~\cite{thakur2005optimization} & $2(N-1)\alpha$ & $\frac{2(N-1)}{N}\beta+\frac{(N-1)}{N}\gamma$  \\\hline
		\end{tabular}
\end{table}
With a given hardware configuration (i.e., $N, \alpha, \beta$, and $\gamma$ are fixed), the time cost of the all-reduce operation is a linear function of the message size $M$ with a y-intercept $a$ and a slope $b$. We empirically validate this linear model in Section 6.2.

One important property of WFBP-SGD is that the messages are communicated layer by layer, which means that it needs to invoke many all-reduce operations. In each all-reduce operation, however, there is an extra cost of $a$ which is not related to $M$. Importantly, the linear function with a positive y-intercept value has a property of
\begin{equation}\label{equ:pro}
T_{ar}(M_{1})+T_{ar}(M_{2}) > T_{ar}(M_1+M_2).
\end{equation}
In other words, communicating a single message of size $M_1+M_2$ is more efficient than communicating a message of size $M_1$ and a message of size $M_2$ separately.

\section{Problem Formulation}\label{s:profor}

Eq. (\ref{equ:pro}) indicates that merging the gradients can improve the communication efficiency. If one merges all layers into one layer so that the communication is only invoked once (i.e., the single-layer communication \cite{you2017scaling}), then the overall communication time is minimal. However, the single-layer communication requires all gradients to be calculated first, which prohibits the overlap between communications and computations. Therefore, we would like to merge the layers appropriately so that it not only reduces the communication by merging, but also exploits the pipelining between communications and computations. 

Before formulating the problem, we formally define the concept of merged-gradient layer as follows.
\begin{definition}{(Merged-gradient layer).}
A layer $l$ is called a merged-gradient layer if at the timestamp of $\tau_c^{(l)}$, instead of communicating the gradients of that layer, we merge its gradients to layer $l-1$ and postpone the communication. The operator $\oplus$ defines the gradients merging between two consecutive layers, say $l\oplus (l-1)$. Merging more than two layers is possible by setting consecutive layers into merged-gradient layer.
\end{definition}

\begin{definition}{(Normal layer).}
If a layer $l$ is not a merged-gradient layer, then it is called a normal layer and its gradients will not be merged into layer $l-1$. Its gradients (including those merged from other layers if any) should be communicated as earlier as possible, i.e., when its own gradients have been calculated and the previously scheduled communication has finished.
\end{definition}

There are several properties if layer $l$ is a merged-gradient layer.
\begin{itemize}
    \item $l>1$, since the first layer of the DNN cannot be a merged-gradient layer according to the definition.
    \item There is no communication dedicated for layer $l$, i.e.,
        \begin{equation}\label{ass:1}
            t_{c}^{(l)}=0.
        \end{equation}
    \item The number of updated parameters of layer $l-1$ becomes the summation of that of layer $l$ and layer $l-1$.
        \begin{equation}\label{ass:3}
            p^{(l-1)}=p^{(l-1)}+p^{(l)}.
        \end{equation}
    \item The timestamp when layer $l-1$ can begin the gradient communication is updated to
        \begin{equation}\label{ass:2}
            \tau_c^{(l-1)}=\text{max}\{\tau_c^{(l)}, \tau_b^{(l-1)}+t_b^{(l-1)}\}.
    \end{equation}
\end{itemize}

Intuitively, if merging the gradients of two consecutive layers can save time, then we should merge the two layers. In the following, we discuss a complete set of four cases of computation and communication patterns that may happen during the training process with WFBP for layer $l$. The four cases with potential merging are illustrated in Fig. \ref{fig:pipeline}.
\begin{figure}[!ht]
	\centering
	\begin{subfigure}{0.48\textwidth}
		\includegraphics[width=\linewidth]{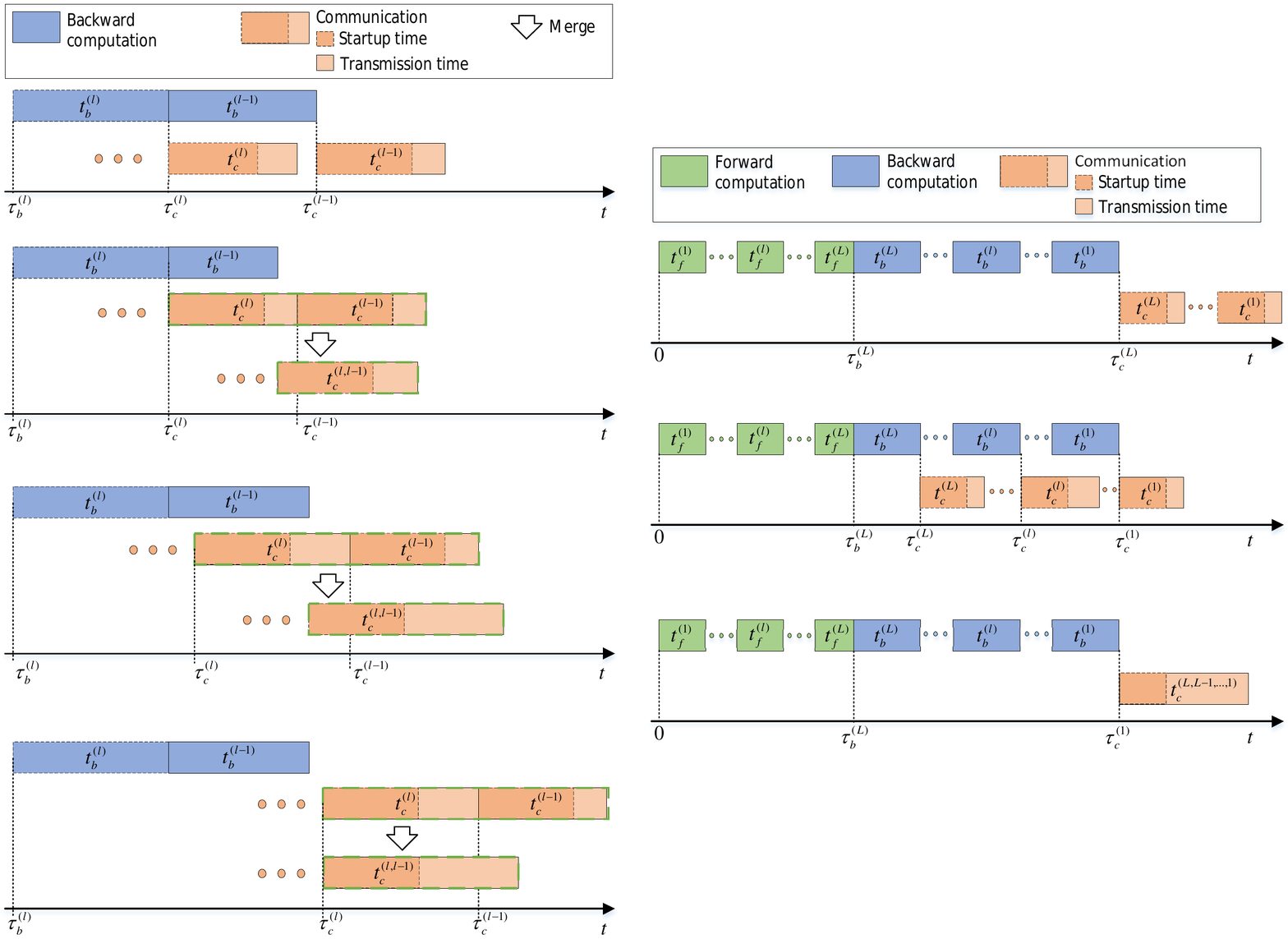}
		\caption{Case 1.}
	\end{subfigure}
	\begin{subfigure}{0.48\textwidth}
	    \vspace{10pt}
		\includegraphics[width=\linewidth]{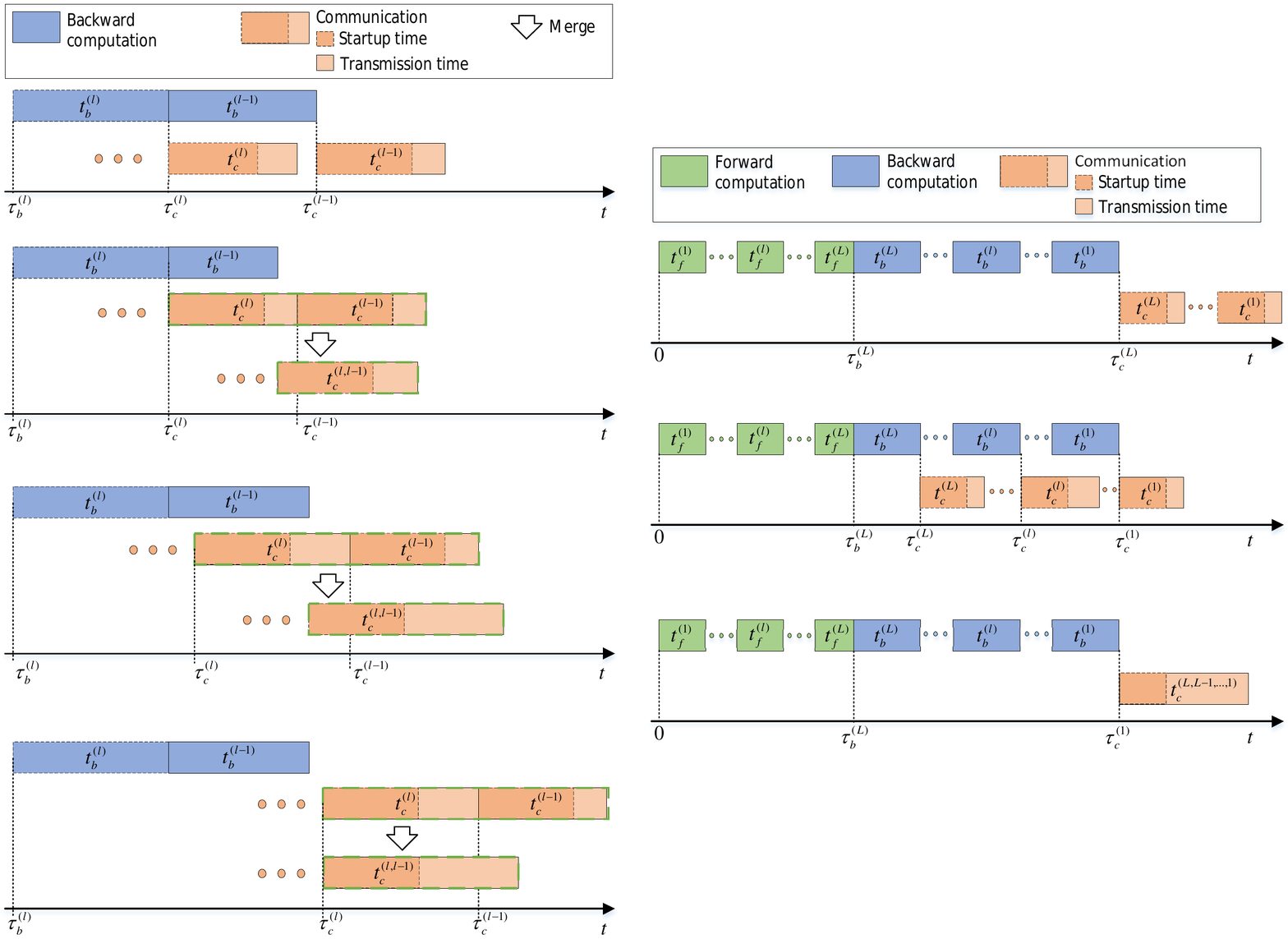}
		\caption{Case 2.}
	\end{subfigure}
	\begin{subfigure}{0.48\textwidth}
	    \vspace{10pt}
		\includegraphics[width=\linewidth]{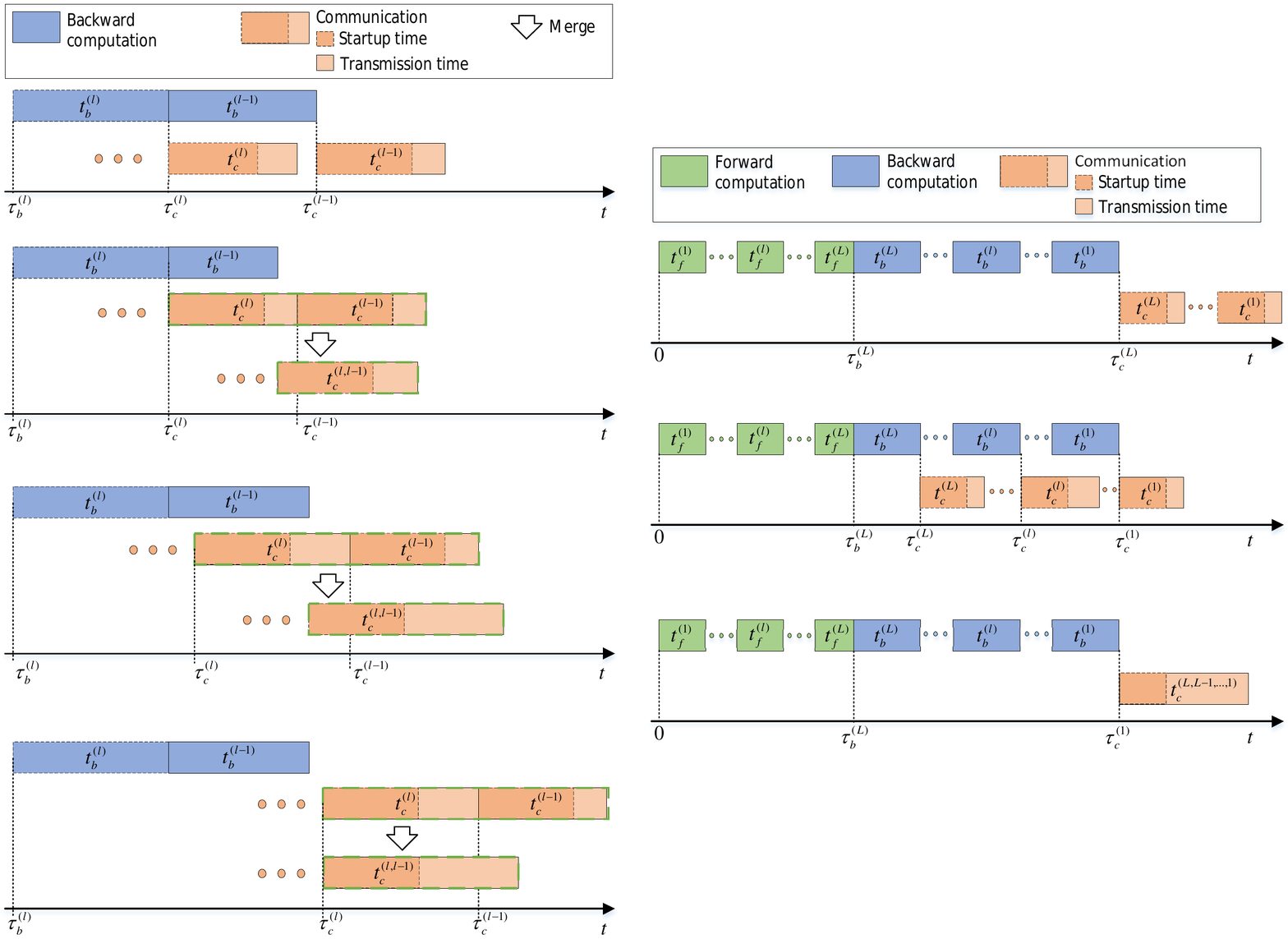}
		\caption{Case 3.}
	\end{subfigure}
	\begin{subfigure}{0.48\textwidth}
	    \vspace{10pt}
		\includegraphics[width=\linewidth]{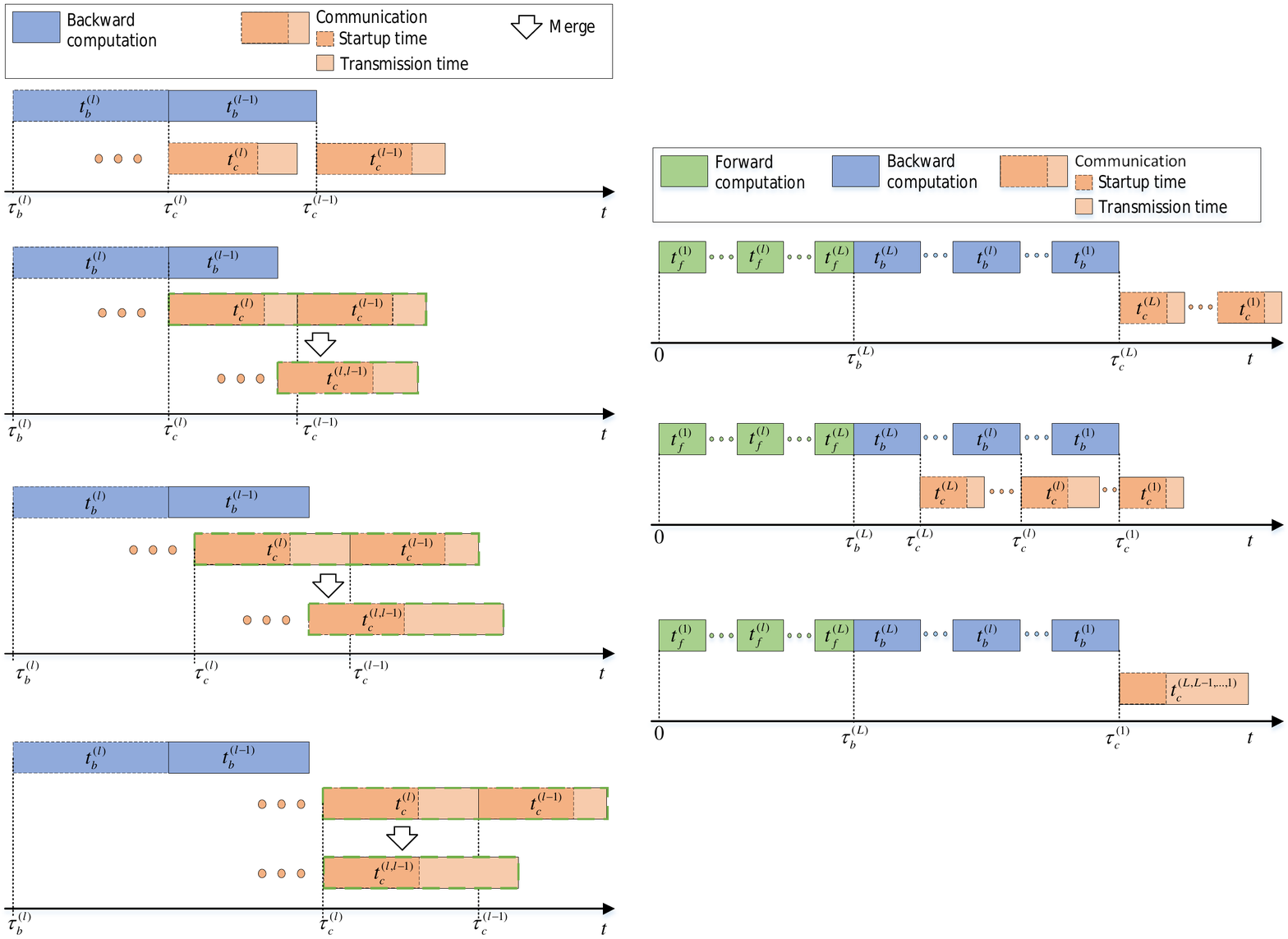}
		\caption{Case 4.}
	\end{subfigure}
	\caption{Four cases of gradient communication at one iteration on layer $l$ in WFBP-SGD. Note that the forward computation is not plotted as it is not related to the pipelining timeline.}
	\label{fig:pipeline}
\end{figure}

\textbf{Case 1}. In the ideal case, the communication of layer $l$ is fully hidden by its previous layer's computation, that is 
\begin{equation}
    \tau_{c}^{(l)}+t_c^{(l)}\leq \tau_{b}^{(l-1)}+t_{b}^{(l-1)}.
\end{equation}
The overhead of gradient communication is totally hidden by computation so that it is not necessary to merge the gradients. 

\textbf{Case 2}. The communication of layer $l$ is partially overlapped with the computation of layer $l-1$, and the communication of layer $l$ begins before the end of the computation of layer $l-1$, that is 
\begin{equation}
\tau_{c}^{(l)}+t_c^{(l)} > \tau_{b}^{(l-1)}+t_{b}^{(l-1)} > \tau_{c}^{(l)}. 
\end{equation}
Without merging, the communication of layer $l$ can immediately begin after the gradients of layer $l$ have been calculated, i.e., $\tau_{c}^{l-1}=\tau_{c}^{(l)}+t_c^{(l)}$. On the other hand, if we want to merge layer $l$ with layer $l-1$, the communication can only happen after the gradients of layer $l-1$ have been calculated. So we should consider whether merging layer $l$ and $l-1$ could bring any benefits or not. As shown in Fig. \ref{fig:pipeline}(b), the merged communication cost takes shorter time to finish, which indicates that the reduced time by merging is greater than the additional waiting time for the gradient computation of layer $l-1$. Formally,
\begin{equation}
\begin{split}
    &\tau_b^{(l-1)}+t_b^{(l-1)}-\tau_c^{(l)} \\
   <& T_{ar}(p^{(l)}+p^{(l-1)})- (T_{ar}(p^{(l)})+T_{ar}(p^{(l-1)})) = a.
\end{split}
\end{equation}
In this case, we prefer to merge the gradients of layer $l$ to layer $l-1$, i.e., making layer $l$ be a merged-gradient layer.

\textbf{Case 3}. In this case, the communication of layer $l$ is also partially overlapped with the computation of $l-1$ as Case 2. However, different from Case 2, the merging operation results in a longer time because the reduced communication time is not as significant as the additional waiting time. To be specific, 
\begin{equation}
    \tau_{c}^{(l)}+t_c^{(l)} > \tau_{b}^{(l-1)}+t_{b}^{(l-1)} > \tau_{c}^{(l)},
\end{equation}
and
\begin{equation}
    \begin{split}
    &\tau_b^{(l-1)}+t_b^{(l-1)}-\tau_c^{(l)} \\
   \geq & T_{ar}(p^{(l)}+p^{(l-1)})- (T_{ar}(p^{(l)})+T_{ar}(p^{(l-1)})) = a.
\end{split}
\end{equation}
Therefore, we would not make layer $l$ be a merged-gradient layer because merging the gradients of layer $l$ to layer $l-1$ will decrease the time efficiency.

\textbf{Case 4}. Very different from the previous cases, there is no overlap between the communication of layer $l$ and the computation of layer $l-1$ as shown in Fig. \ref{fig:pipeline}(d). This happens when the previous communication time is longer than the previous computation time. That is, 
\begin{equation}
    \tau_c^{(l)}\geq \tau_b^{(l-1)}+t_b^{(l-1)}.
\end{equation}
In this case, the communications of layer $l$ and layer $l-1$ do not need to wait for the end of the computation of layer $l-1$; hence merging gradients of layer $l$ to layer $l-1$ dose not introduce any waiting time for the computation, which would obviously reduce the communication time, i.e.,
\begin{equation}
    T_{ar}(p^{(l)}+p^{(l-1)})- (T_{ar}(p^{(l)})+T_{ar}(p^{(l-1)})) = a > 0.
\end{equation}
Thus, we would like to make layer $l$ be a merged-gradient layer in this case.

From the above discussions, we can see that not all gradient merging can bring benefits of reduced iteration time (e.g., Case 3). Therefore, our problem is to find all merged-gradient layers such that the overall iteration time is minimal. Since a layer is either a normal-layer or a merged-gradient layer, we use $l_n$ and $l_m$ to denote the type of  normal-layer and the merged-gradient layer respectively. Let the variable $e^{(l)}$ denote the type of layer $l$ ($l=1,2,...,L$), $e^{(l)}\in \{l_n, l_m\}$. For an $L$-layer DNN model, it can be represented by 
\begin{equation}
    \mathbb{M}=\{[e^{(1)},...,e^{(l)},...,e^{(L)}]|e^{(l)}\in \{l_n, l_m\} \text{ and } 1\leq l\leq L\}.
\end{equation}
Obviously, the number of combinations of normal layers and merge-gradient layers is $|\mathbb{M}|=2^L$. Therefore, our goal is to find an $m\in \mathbb{M}$ such that the iteration time is minimal.

Assuming the linear communication model of Eq. (\ref{equ:tcomm}), the communication time of each layer is represented by
\begin{equation}\label{equ:comm}
t_{c}^{(l)}=T_{ar}(p^{(l)}).
\end{equation}
For a given DNN training with a specific mini-batch size on a hardware environment, the computation time of one iteration can be easily measured at the beginning of training. Since the architecture of the DNN would not change during the training, the feed-forward and backward propagation computation time is very stable \cite{shi2016benchmarking}. That is, $t_b^{(l)}$ is known for $l=1,2,...,L$. However, the beginning timestamp ($\tau_c^{(l)}$) and the communication time ($t_c^{(l)}$) of layer $l$ will be different when $e^{(l)}=l_n$ or $e^{(l)}=l_m$ as we discussed before. Therefore, we generalize the problem as follows.

For a given $L$-layer\footnote{This is also applicable to current DL frameworks like PyTorch, in which the learnable parameters of a layer may be separated as two tensors.} DNN trained with WFBP-SGD on a specific cluster with $P$ workers, we would like to determine $e^{(l)}$ to be $l_n$ or $l_m$ such that the iteration time of training is minimal. Formally, we would like to minimize the iteration time of WFBP-SGD in Eq. (\ref{equ:wfbpiter}), i.e.,
\begin{equation}\label{equ:problem}
    \text{minimize: } t_{iter}=t_c^{(1)}+\max\{\tau_c^{(2)}+t_c^{(2)}, \tau_b^{(1)}+t_b^{(1)}\}.
\end{equation}

\section{Solution: MG-WFBP}\label{s:method}

In this section, we first perform some theoretical analysis on the optimization problem, and then propose an optimal and efficient solution named merged-gradient WFBP (MG-WFPB) to the problem.

\subsection{Theoretical Analysis}
It is obvious that the objective function of Eq. (\ref{equ:problem}) can be rewritten by
\begin{equation}\label{equ:newopt}
\begin{split}
t=&t_c^{(1)}+\text{max}\left\{\tau_c^{(2)}+t_c^{(2)}, \tau_b^{(1)}+t_b^{(1)}\right\}\\
=&T_{ar}(p^{(1)})+\text{max}\left\{\tau_c^{(2)}+T_{ar}(p^{(2)}), \tau_b^{(1)}+t_b^{(1)}\right\}\\
=&T_{ar}(p^{(1)})+\text{max}\left\{ 
\text{max}\{\tau_c^{(3)}+T_{ar}(p^{(3)}), \tau_b^{(2)}+t_b^{(2)}\} \right. \\
 &\left. +T_{ar}(p^{(2)}), \tau_b^{(1)}+t_b^{(1)}\right\}.
\end{split}
\end{equation}

It can be seen that the objective function consists of embedding $\max$ functions from the first layer to the last layer. We first analyze the difference of layer $2$ be a normal layer or a merged-gradient layer, and then we extend it to a general layer $l$ to prove its optimality.

Assume that layers $L,L-1,...,3$ are normal layers, and layer $2$ is a merged-gradient layer, we have $t_c^{(2)}=0$ and $t_c^{(1)}=T_{ar}(p^{(2)}+p^{(1)})$. We plug in these two new values to Eq. (\ref{equ:newopt}) to obtain
\begin{equation}\label{equ:merged}
\begin{split}
\hat{t}&=T_{ar}(p^{(2)}+p^{(1)})+\text{max}\left\{\tau_c^{(2)}, \tau_b^{(1)}+t_b^{(1)}\right\}.
\end{split}
\end{equation}
Compare Eq. (\ref{equ:newopt}) to Eq. (\ref{equ:merged}), we want to find out under what conditions $\hat{t}< t$, i.e., layer $2$ can be a gradient-merged layer. Specifically, we would like to derive the conditions such that 
\begin{equation}
    \begin{split}
        \hat{t}=&T_{ar}(p^{(2)}+p^{(1)})+\text{max}\left\{\tau_c^{(2)}, \tau_b^{(1)}+t_b^{(1)}\right\} \\
        <& t = T_{ar}(p^{(1)})+\text{max}\left\{\tau_c^{(2)}+T_{ar}(p^{(2)}), \tau_b^{(1)}+t_b^{(1)}\right\},
    \end{split}
\end{equation}
which is equivalent to
\begin{equation}\label{equ:ineq}
    \begin{split}
        &b\times p^{(2)}+\text{max}\left\{\tau_c^{(2)}, \tau_b^{(1)}+t_b^{(1)}\right\} \\
        <& \text{max}\left\{\tau_c^{(2)}+T_{ar}(p^{(2)}), \tau_b^{(1)}+t_b^{(1)}\right\}.
    \end{split}
\end{equation}
Since there are two max functions in the above inequality, we need to decompose the max functions. Decomposing the two max functions explicitly corresponds to the four cases we discuss in the previous section. Note that it is impossible that 
$\tau_c^{(2)}+T_{ar}(p^{(2)}) \leq \tau_b^{(1)}+t_b^{(1)}$ and $\tau_c^{(2)} > \tau_b^{(1)}+t_b^{(1)}$ hold simultaneously. Therefore we decompose the two max functions with the following three conditions.

\textbf{Condition 1}. $\tau_c^{(2)}+T_{ar}(p^{(2)}) \leq \tau_b^{(1)}+t_b^{(1)}$. Then $\tau_c^{(2)} \leq \tau_b^{(1)}+t_b^{(1)}$ also holds. The inequality (\ref{equ:ineq}) becomes
\begin{align*}
b\times p^{(2)}+ \tau_b^{(1)}+t_b^{(1)} < \tau_b^{(1)}+t_b^{(1)},
\end{align*}
which obviously does not hold as $b \times p^{(2)}>0$. Therefore, layer $2$ should be a normal layer in this case, since making layer $2$ a merged-gradient layer cannot reduce the iteration time.

\textbf{Condition 2}. The condition is 
\begin{equation}
\tau_{c}^{(2)}+T_{ar}(p^{(2)}) > \tau_{b}^{(1)}+t_{b}^{(1)} > \tau_{c}^{(2)}. 
\end{equation}
We can decompose inequality (\ref{equ:ineq}) to 
\begin{equation}
    \begin{split}
        &b\times p^{(2)}+ \tau_b^{(1)}+t_b^{(1)} \\
        <& \tau_c^{(2)}+T_{ar}(p^{(2)})=\tau_c^{(2)}+a+b\times p^{(2)},
    \end{split}
\end{equation}
which is equivalent to 
\begin{equation}\label{equ:inq-c2}
    \begin{split}
        \tau_b^{(1)}+t_b^{(1)}<\tau_c^{(2)}+a.
    \end{split}
\end{equation}
So if inequality (\ref{equ:inq-c2}) is true, then we can make layer $2$ a merged-gradient layer to save the iteration time; otherwise we make it a normal layer. 

\textbf{Condition 3}. The condition is 
\begin{equation}
\tau_{c}^{(2)}+T_{ar}(p^{(2)}) > \tau_{c}^{(2)}> \tau_{b}^{(1)}+t_{b}^{(1)}.
\end{equation}
We decompose inequality (\ref{equ:ineq}) to 
\begin{equation}
    \begin{split}
        b\times p^{(2)}+\tau_c^{(2)}<\tau_c^{(2)}+T_{ar}(p^{(2)}).
    \end{split}
\end{equation}
It is equivalent to
\begin{equation}
    \begin{split}
        b\times p^{(2)}+\tau_c^{(2)}<\tau_c^{(2)}+a+b\times p^{(2)},
    \end{split}
\end{equation}
which is obviously true as $a>0$. Therefore, under this condition, we prefer to make layer $2$ a merged-gradient layer.

To summarize, under Condition 2 with inequality (\ref{equ:inq-c2}) and Condition 3, making layer $2$ a merged-gradient layer can reduce the iteration time. Now we extend the above analysis to a general layer $l$ and $l>1$. When we just consider the end time of layer $l-1$, making layer $l$ be a merged-gradient layer if Condition 2 with inequality (\ref{equ:inq-c2}) holds or Condition 3 holds will reduce the end time of layer $l-1$. Thus, we have the following lemma.
\begin{lemma}\label{lemma:mergedlayer}
Given an $L$-layer DNN which is trained with WFBP-SGD in a cluster of $N$ workers, if the gradient communication is done through all-reduce, layer $l>1$ should be a merged-gradient layer to reduce the iteration time if and only if
\begin{equation}\label{equ:lemma-inq1}
    \tau_b^{(l-1)}+t_b^{(l-1)} < \tau_c^{(l)}+a.
\end{equation}
\begin{proof}
As we discussed in the above three conditions, if Condition 2 together with inequality (\ref{equ:inq-c2}) or Condition 3 holds, layer $l$ should be a merged-gradient layer to reduce the iteration time, otherwise it should be a normal layer. The combination of Condition 2 together with inequality (\ref{equ:inq-c2}) and Condition 3 is 
\begin{equation}
    \tau_b^{(l-1)}+t_b^{(l-1)} < \tau_c^{(l)}+a,
\end{equation}
which concludes the proof.
\end{proof}
\end{lemma}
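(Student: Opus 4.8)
The plan is to collapse the global objective (Eq.~\ref{equ:problem}) into a single local comparison at layer $l$ and then characterize exactly when that comparison favors merging. First I would write down the two candidate costs. When layer $l$ is a normal layer I unroll the nested-max recursion of Eq.~\ref{equ:newopt} to isolate the contribution of layer $l-1$, giving $t=T_{ar}(p^{(l-1)})+\max\{\tau_c^{(l)}+T_{ar}(p^{(l)}),\,\tau_b^{(l-1)}+t_b^{(l-1)}\}$. When layer $l$ is merged into layer $l-1$ I substitute the merged-layer properties $t_c^{(l)}=0$, $p^{(l-1)}\!\leftarrow p^{(l-1)}+p^{(l)}$, and the updated start time of Eq.~\ref{ass:2}, reproducing the layer-$2$ computation that led to Eq.~\ref{equ:merged}: $\hat{t}=T_{ar}(p^{(l)}+p^{(l-1)})+\max\{\tau_c^{(l)},\,\tau_b^{(l-1)}+t_b^{(l-1)}\}$.

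Second, I would reduce $\hat{t}<t$ using the linear model $T_{ar}(M)=a+bM$ of Eq.~\ref{equ:tcomm}. The key cancellation is $T_{ar}(p^{(l)}+p^{(l-1)})-T_{ar}(p^{(l-1)})=b\,p^{(l)}$, which removes the shared latency $a$ and the $b\,p^{(l-1)}$ term and leaves exactly the inequality of Eq.~\ref{equ:ineq}, namely $b\,p^{(l)}+\max\{\tau_c^{(l)},\,\tau_b^{(l-1)}+t_b^{(l-1)}\}<\max\{\tau_c^{(l)}+T_{ar}(p^{(l)}),\,\tau_b^{(l-1)}+t_b^{(l-1)}\}$. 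I would then resolve the two nested maxima by a case split on where $\tau_b^{(l-1)}+t_b^{(l-1)}$ falls relative to the strictly ordered pair $\tau_c^{(l)}<\tau_c^{(l)}+T_{ar}(p^{(l)})$; since $T_{ar}(p^{(l)})>0$ only three orderings are possible (Conditions 1--3), and the fourth combinatorial case is ruled out. In Condition~1 the inequality reduces to $b\,p^{(l)}<0$ and fails, so we do not merge; in Condition~3 it reduces to $0<a$ and always holds, so we merge; in Condition~2 it reduces precisely to Eq.~\ref{equ:inq-c2}, $\tau_b^{(l-1)}+t_b^{(l-1)}<\tau_c^{(l)}+a$.

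Third, I would verify that the union of the merge-favorable subcases collapses to the single criterion of Eq.~\ref{equ:lemma-inq1}, establishing both directions. For the forward direction, Condition~3 gives $\tau_c^{(l)}>\tau_b^{(l-1)}+t_b^{(l-1)}$, which with $a>0$ already implies the criterion, while Condition~2 with Eq.~\ref{equ:inq-c2} is the criterion verbatim. For the converse, I would check the complement: in Condition~1 the premise $\tau_c^{(l)}+T_{ar}(p^{(l)})\leq\tau_b^{(l-1)}+t_b^{(l-1)}$ forces $\tau_c^{(l)}+a<\tau_b^{(l-1)}+t_b^{(l-1)}$ (because $b\,p^{(l)}>0$), so the criterion is false exactly when merging is not beneficial, and in Condition~2 without Eq.~\ref{equ:inq-c2} the criterion again fails. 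Hence merging helps if and only if Eq.~\ref{equ:lemma-inq1} holds.

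The part that needs the most care is the step from a local improvement to a genuine reduction of the iteration time. The case analysis only shows that merging lowers the completion time of layer $l-1$'s (now merged) communication; I would then argue that the objective of Eq.~\ref{equ:problem} is a composition of nondecreasing $\max$ operations in each intermediate completion time, so lowering layer $l-1$'s completion time can only weakly lower $t_{iter}$, never raise it. Stating this monotonicity explicitly — together with the standing assumption that the layers above $l$ are fixed as normal layers, which justifies treating the decision one layer at a time — is what legitimizes the local criterion; I would also note that equality in Eq.~\ref{equ:lemma-inq1} yields $\hat{t}=t$, consistent with the strict inequality in the statement.
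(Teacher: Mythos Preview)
Your proposal is correct and follows essentially the same approach as the paper: set up $t$ versus $\hat t$, reduce via the linear all-reduce model to the inequality of Eq.~\ref{equ:ineq}, resolve by the same three-way case split (Conditions~1--3), and collapse the merge-favorable region to Eq.~\ref{equ:lemma-inq1}. You are in fact more careful than the paper on two points it leaves implicit: you verify the converse direction explicitly (showing the criterion fails under Condition~1 and under Condition~2 without Eq.~\ref{equ:inq-c2}), and you articulate the monotonicity-of-nested-$\max$ argument needed to lift the local improvement at layer $l-1$ to a reduction of the full $t_{iter}$.
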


From Lemma \ref{lemma:mergedlayer}, it is seen that whether layer $l$ should be a merged-gradient layer or not depends on the end of computation time of layer $l-1$ (i.e., $\tau_b^{(l-1)}+t_b^{(l-1)}$) and its own beginning time of communication (i.e., $\tau_c^{(l)}$). Thus, the communications of higher layers are not affected by the lower layers, while the lower layers are affected by the higher ones as the lower layer can only begin after the higher layers have finished. If layer $l$ is a normal layer, we can continue to determine layer $l-1$ by checking the above three conditions. If layer $l$ is a merged-gradient layer, layer $l-1$ has earlier end time according to the benefit of the merged-gradient layer. Again we also continue to determine the type of layer $l-1$ as the same way of layer $l$, which results in a recursive way from layer $L$ to layer $2$. Consequently, we determine the last layer $L$ whether it can be a merged-gradient layer or a normal layer, and then determine layer $L-1$, and finally to layer $2$ to find the final solution $m\in \mathbb{M}$ such that Eq. (\ref{equ:newopt}) is minimal. 

\begin{theorem}\label{theorem:opt}
Given an $L$-layer DNN which is trained with WFBP-SGD in a cluster of $N$ workers, if the gradient communication is done through all-reduce, one can find $m\in \mathbb{M}$ such that the iteration time is minimal, and
\begin{equation}\label{the:solution}
m=[e^{(L)}, e^{(L-1)},...,e^{(1)}],
\end{equation}
where 
\begin{equation}\label{equ:layertype}
    e^{(l)}=
        \begin{cases}
        l_m & \text{if }\tau_b^{(l-1)}+t_b^{(l-1)} < \tau_c^{(l)}+a \text{ and } l>1\\
        l_n & \text{otherwise}
        \end{cases}
\end{equation}
for $1 \leq l \leq L$.
\end{theorem}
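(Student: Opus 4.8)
The plan is to prove optimality by a downward induction on the layer index $l$, from $L$ to $2$, gluing the per-layer decision of Lemma~\ref{lemma:mergedlayer} together with a monotonicity argument. First I would record the two facts that make a top-down sweep legitimate. Using the recursions (\ref{equ:startcomp})--(\ref{equ:startt}) together with the merge update (\ref{ass:2}), the objective collapses to the finish time of layer~$1$, namely $t_{iter}=\tau_c^{(1)}+T_{ar}(p^{(1)})$, and each recursion step expresses $\tau_c^{(l-1)}$ as a $\max$ of $\tau_b^{(l-1)}+t_b^{(l-1)}$ with a term built from $\tau_c^{(l)}$ and the (possibly merged) payload of layer~$l$. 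Crucially, the state seen at layer~$l$, the pair $(\tau_c^{(l)},p^{(l)})$, is determined entirely by the choices made for layers $L,\dots,l+1$ and is independent of those for layers $l-1,\dots,1$; this is the ``higher layers are unaffected by lower layers'' property noted after Lemma~\ref{lemma:mergedlayer}, and it licenses deciding $e^{(L)},e^{(L-1)},\dots,e^{(2)}$ in that order.

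The second ingredient is a monotonicity lemma: $t_{iter}$ is a non-decreasing (indeed $1$-Lipschitz) function of every intermediate $\tau_c^{(l)}$. This is immediate because each map $\tau_c^{(l)}\mapsto\tau_c^{(l-1)}$ is a $\max$ composed with an affine increasing term, hence monotone, and compositions of monotone maps are monotone. Consequently, moving to a state whose downstream-relevant quantities are all no larger cannot increase the final iteration time. To make ``downstream-relevant quantity'' precise I would use the tentative finish time $\Phi^{(l)}:=\tau_c^{(l)}+T_{ar}(p^{(l)})$, the instant at which layer~$l$'s communication would complete if layer~$l$ were normal. A short algebraic comparison of the merged and normal updates (the very computation underlying Conditions~1--3) shows that merging layer~$l$ yields a strictly smaller $\Phi^{(l-1)}$ if and only if inequality (\ref{equ:lemma-inq1}) holds, and that merging always yields a $\tau_c^{(l-1)}$ no larger than the normal choice.

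With these pieces the induction runs as follows. At layer~$l$ the greedy rule (\ref{equ:layertype}) selects, by Lemma~\ref{lemma:mergedlayer}, the option with the smaller resulting $\Phi^{(l-1)}$. When the rule merges, the merged state dominates the normal one in \emph{both} coordinates $\tau_c^{(l-1)}$ and $\Phi^{(l-1)}$, so by monotonicity no completion of the remaining layers can beat starting from the greedy state; when the rule keeps layer~$l$ normal, the normal state has the smaller $\Phi^{(l-1)}$. Feeding the resulting state into the inductive hypothesis for layers $l-1,\dots,1$ then shows the greedy assignment attains the minimum of (\ref{equ:newopt}). I would phrase the final gluing as an exchange argument: starting from any optimal $m^\ast$, take the highest layer at which it disagrees with (\ref{equ:layertype}), flip it to the greedy choice, invoke the $\Phi$-comparison and monotonicity to conclude the objective does not increase, and iterate until $m^\ast$ coincides with the greedy solution.

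The step I expect to be the real obstacle is the case where the greedy rule decides \emph{not} to merge. Merging has two opposed effects: it advances the next start time $\tau_c^{(l-1)}$ (good) but inflates $p^{(l-1)}$ and hence $T_{ar}(p^{(l-1)})$ (bad), so the merged and normal states are generally incomparable and scalar monotonicity in $\tau_c$ alone cannot settle the comparison. The normal choice wins on $\Phi^{(l-1)}$ but forfeits the earlier start that merging would offer, and one must argue this forfeit is harmless. The resolution I would pursue is that (\ref{equ:lemma-inq1}) fails exactly when $\tau_b^{(l-1)}+t_b^{(l-1)}\ge\tau_c^{(l)}+a$, i.e. when layer~$l-1$ is computation-bound; in that regime $\tau_c^{(l-1)}$ is governed by $\tau_b^{(l-1)}+t_b^{(l-1)}$ rather than by the incoming communication, so the earlier start from merging is absorbed by the $\max$ and contributes nothing while the extra payload strictly hurts. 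Making this ``the slack absorbs the gain'' statement precise for \emph{all} downstream decision sequences, not merely for the locally next layer, is the technical heart of the proof and where I would spend most of the effort.
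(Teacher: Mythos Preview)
Your overall strategy---decide $e^{(l)}$ from $L$ down to $2$ via Lemma~\ref{lemma:mergedlayer}, then glue the local decisions into global optimality by an exchange/monotonicity argument---is the same strategy the paper uses. The difference is in depth: the paper's own proof is three sentences that simply invoke Lemma~\ref{lemma:mergedlayer} at each layer and then assert that flipping any $e^{(l)}$ away from the greedy choice ``would bring longer iteration time.'' It does not explicitly address the two-dimensional state issue you raise, nor does it spell out the exchange argument. So you are not taking a different route; you are taking the paper's route and filling in the part it leaves implicit.

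On the substance of the gap you flag: your instinct is correct that the ``don't merge'' case is where the work lies, and your proposed resolution is the right one. When (\ref{equ:lemma-inq1}) fails we have $\tau_c^{(l)}+a\le \tau_b^{(l-1)}+t_b^{(l-1)}$, so the merged start time collapses to $\tau_c^{(l-1)}_{\mathrm{merge}}=\tau_b^{(l-1)}+t_b^{(l-1)}$ exactly, while the normal start time is $\tau_c^{(l-1)}_{\mathrm{norm}}=\max\{\tau_c^{(l)}+a+bp^{(l)},\,\tau_b^{(l-1)}+t_b^{(l-1)}\}$. The clean way to carry the comparison downstream is to track the pair $(\tau_c,\,\tau_c+bP)$ where $P$ is the pending payload: a short case check shows that after the greedy ``don't merge'' decision the normal state dominates the merged state in the second coordinate (this is precisely Lemma~\ref{lemma:mergedlayer}), and the excess in the first coordinate, $\tau_c^{(l-1)}_{\mathrm{norm}}-\tau_c^{(l-1)}_{\mathrm{merge}}$, is bounded by $bp^{(l)}$ and is absorbed at the next $\max$ against $\tau_b^{(l-2)}+t_b^{(l-2)}\ge\tau_b^{(l-1)}+t_b^{(l-1)}\ge\tau_c^{(l)}+a$. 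Iterating this gives the ``slack absorbs the gain'' statement for every downstream sequence, which is what you need. So the obstacle you anticipate is real but closes along exactly the line you sketch; once you write it out, your proof will be strictly more complete than the paper's.
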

\begin{proof}
A layer $l$ is either a merged-gradient layer or a normal layer. According to Lemma \ref{lemma:mergedlayer}, for $l>1$ and $\tau_b^{(l-1)}+t_b^{(l-1)} < \tau_c^{(l)}+a$, $e^{(l)}=l_m$ has shorter time than $e^{(l)}=l_n$. For $l=1$ or $\tau_b^{(l-1)}+t_b^{(l-1)} \geq \tau_c^{(l)}+a$, $e^{(l)}=l_n$ has shorter time than $e^{(l)}=l_m$. Consequently, if $m=[e^{(L)}, e^{(L-1)},...,e^{(1)}]$ and $e^{(l)}$ is assigned by Eq. (\ref{equ:layertype}), then changing the merged-gradient layers to normal layers or changing the normal layers to merged-gradient would bring longer iteration time, which conclude the proof.
\end{proof}

\subsection{Algorithms}
Assume that the $N$-node cluster is connected by an interconnection with a bandwidth $B$, we can measure the all-reduce cost with respect to message size to derive the parameter $a$ and $b$ in Eq. (\ref{equ:tcomm}). Therefore, we can estimate the communication time of all-reduce for any message size. For the backward computation time, we can also benchmark for a particular GPU at the beginning of training. Thus, $t_f$, $t_b^{(l)}$ and $t_c^{(l)}$, where $1 \leq l \leq L$, are known. According to Theorem \ref{theorem:opt}, we drive the algorithm to find $m$ as shown in Algorithm \ref{algo:mgbp}.

\begin{algorithm}[h]
	\caption{Find optimal $m\in \mathbb{M}$}\label{algo:mgbp}
 	\small
		\textbf{Input: }$a$, $b$, $L$, $\bm{t_b}[1...L]$, $\bm{p}=[p^{(1)},p^{(2)},...,p^{(L)}]$.\\
		\textbf{Output: $m$}
	\begin{algorithmic}[1]
		\State Initialize $\bm{t_c}[1...L]$; // Communication time cost
		\State Initialize $\bm{\tau_b}[1...L]$; // Backward computation start time
		\State Initialize $m[1...L]=\{l_n\}$; // Initialize all layers be normal layers
		\For{$l=1\rightarrow L$}
			\State $\bm{t_c}[l]=a+b\times \bm{p}[l]$;
		\EndFor
		\State $\bm{\tau_b}[L]=0$;
		\For{$l=L-1\rightarrow 1$}
			\State $\bm{\tau_b}[l]$ = $\bm{\tau_b}[l+1]$ + $\bm{t_b}[l+1]$;
		\EndFor
		\State $\bm{\tau_c}$=\Call{CalculateCommStart}{$\bm{t_c}, \bm{t_b}, \bm{\tau_b}, L$}; 
		\For{$l=L\rightarrow 2$}
		\If{$\bm{\tau_b}[l-1]+\bm{t_b}[l-1]-\bm{\tau_c}[l] < a$} // Eq. (\ref{equ:layertype})
		\State \Call{Merge}{$\bm{\tau_b}, \bm{t_c}, \bm{p}, l$};
		\State $\bm{\tau_c}$=\Call{CalculateCommStart}{$\bm{t_c}, \bm{t_b}, \bm{\tau_b}, L$};
		\State $m[l]=l_m$; // Make $l$ be the merged-gradient layer
		\EndIf
		\EndFor
		\State Return $m$;

		\Procedure{Merge}{$\bm{\tau_b}, \bm{t_c}, \bm{p}, l$}
    		\State $\bm{t_c}[l]=0$;
    		\State $\bm{p}[l-1]=\bm{p}[l-1]+\bm{p}[l]$;
    		\State $\bm{t_c}[l-1]=a+b\times \bm{p}[l-1]$;
		\EndProcedure

		\Procedure{CalculateCommStart}{$\bm{t_c}, \bm{t_b}, \bm{\tau_b}, L$}
		\State Initialize $\bm{\tau_c}[1...L]$; // Communication start time
		\State $\bm{\tau_c}[L]=\bm{\tau_b}[L]+\bm{t_b}[L]$;
		\For{$l=L-1\rightarrow 1$}
		\State $\bm{\tau_c}[l]=\text{max}\{\bm{\tau_c}[l+1]+\bm{t_c}[l+1], \bm{\tau_b}[l]+\bm{t_b}[l]\}$;
		\EndFor
		\State \text{Return } $\bm{\tau_c}$;
		\EndProcedure
	\end{algorithmic}
\end{algorithm}

The algorithm first (line 1-8) initializes the layer-wise gradient communication cost $t_c^{(l)}$, the computation start time $\tau_b^{(l)}$  according to Eq. (\ref{equ:tcomm}) and Eq. (\ref{equ:startcomp}) respectively with system settings and benchmarks in the first several iterations. Then (line 9, line 20-25) the layer-wise start time of communication is calculated based on Eq. (\ref{equ:startt}). After that (line 10-14), the merged-gradient layers are found according to Eq. (\ref{equ:layertype}), in which if there is a layer found as a merged-gradient layer, the communication time of its previous layer should be updated (line 16-19) according to Eq. (\ref{ass:1}), Eq. (\ref{ass:3}) and Eq. (\ref{ass:2}).

The proposed algorithm has a time complexity of $O(L^2)$. For a merged-gradient layer, the algorithm needs to re-calculate the start time of communication of each layer, which is an $O(L)$ search, and it has maximal $L-1$ merged-gradient layers, so the time complexity of the algorithm is $O(L^2)$. Since the algorithm is a one-time calculation at the beginning of the training and it needs not to be re-calculated during the training process, the overhead of finding $m\in \mathbb{M}$ has no side-effect to the training performance.

\begin{algorithm}[h]
	\caption{MG-WFBP S-SGD at worker $g$}\label{algo:gewfbp}
	
	\textbf{Input: } $\bm{D}=[\{X_1, y_1\},...,\{X_n, y_n\}]$, $I$, $net$, $N$, $bs$\\
	\textbf{Output: $\bm{W}=[W^{(1)}, W^{(2)},...W^{(L)}]$}
	\begin{algorithmic}[1]
		\small
		\State Initialize a shared and synchronized queue $Q$;
		\State Obtain the parameter size $\bm{p}[1...L]$ from $net$;
		\State Allocate memories $\bm{W}$;
		\State Initialize $\bm{W}$ in all accelerators;
		\If{rank == 0}
		    \State Benchmark several iterations to achieve $\bm{t_b}[1...L]$;
		    \State Get $m$ from Algorithm \ref{algo:mgbp};
		\EndIf
		\State Bcast($m$, root=0); // Broadcast the optimal solution to all workers
		\State \Call{AsyncHandleCommunication}{$Q, m$};
		\For{$i=1\rightarrow I$}
		\State Sample a mini-batch of data from $D$ to $d$;
		\State \Call{AsyncHandleComputation}{$Q,d,L$};
		\State WaitForLastCommunicationFinished();
		\State $\bm{W}=\bm{W}-\eta\cdot\nabla \bm{W}$,
		\EndFor
		\State NotifyFinished(); // Set $isRunning$ to false
		
		\Procedure{AsyncHandleComputation}{$Q,d,L$}
		\State $o=d$;
		\For{$l=1\rightarrow L$}
		\State $o$=FeedForward($l,o$);
		\EndFor
		\For{$l=L\rightarrow 1$}
		\State BackwardPropagation($l$);
		\State $Q.\text{push}(l)$;
		\EndFor
		\EndProcedure
		
		%	\State
		\Procedure{AsyncHandleCommunication}{$Q, m$}
		\State Initialize $lb$; // layerBuffer
		\While{\textit{isRunning}}
		\State $l=Q.\text{pop()}$;
		\State $lb$.push($l$);
		\If{$m[l] == l_n$}
		    \State SynchonizedAllReduce($lb$);
		    \State $lb.$clear();
		\EndIf
		\If{$l=1$}
		    \State NotifyLastCommunicationFinished();
		\EndIf
		\EndWhile
		\EndProcedure
	\end{algorithmic}
	
\end{algorithm}
We denote the WFBP algorithm integrated with the optimal solution $m$ derived from Algorithm \ref{algo:mgbp} as MG-WFBP. In MG-WFBP, the merged-gradient layers should be communicated with their previous layers. As a result, MG-WFBP achieves the minimal iteration time of S-SGD under known DNNs and system configurations. The algorithm of MG-WFBP S-SGD is shown in Algorithm \ref{algo:gewfbp}. For each worker, the algorithm first (line 1-7) initializes related variables and calculates $m\in \mathbb{M}$ by using Algorithm \ref{algo:mgbp}. Then the root worker (rank 0) broadcasts (line 8) the solution $m$ to all other workers. Line 9 starts a communication thread, and the thread reads the layer number from the shared queue $Q$ and decides whether its gradients should be communicated (line 24-32). After that (line 10-14), it starts the loop of iteration, and iteratively (line 16-22) reads data to do feed forward operations and backward propagation followed by pushing the layer number into the shared queue. Finally, the algorithm notifies a message of \textit{isRunning=false} to finish the training.

\subsection{Applicability to Parameter Server}
The MG-WFBP algorithm relies on the measurement of layer-wise computing time and the modeling of gradient aggregation to obtain the optimal merging strategy. Regarding the parameter server (PS) architecture, the layer-wise computation is the same as that of all-reduce, while the gradient aggregation process becomes a two-phase operation: (1) the workers push gradients to PS; and (2) the workers pull the aggregated gradients (or the latest model parameters) from PS. Each direction of communication (i.e., push or pull) can also be modeled by Eq. (\ref{equ:tcomm}). Therefore, the theoretical analysis of the optimal merge with Theorem~\ref{theorem:opt} holds and our MG-WFBP is applicable to the PS architecture.

\section{System Implementation}\label{s:system}
As shown in Algorithm \ref{algo:gewfbp}, to implement MG-WFBP, our system is required to be equipped with three main features. First, the system needs to measure the backward propagation computation time of each layer (i.e., $t_b^{(l)}$) for any configured deep neural networks. Second, the backward computation and gradient aggregation should be executed in parallel to pipeline communications and computations. Third, the merging operation of the merged-gradient layer should be efficient. It is non-trivial to implement the above three functions in current state-of-the-art DL frameworks (e.g., TensorFlow \cite{abadi2016tensorflow} and PyTorch \cite{pytorch2019}) which exploit a directed acyclic graph (DAG) to represent computing operations during training. Considering that PyTorch becomes more and more popular due to its easy-to-use Pythonic programming style and high performance operators \cite{pytorch2019}, in this section we describe the implementation of MG-WFBP algorithm atop PyTorch.

\subsection{Time Measurement of Backward Propagation}
When deploying the DAG to GPUs in PyTorch, different operators could be executed concurrently due to the execution nature of CUDA streams \cite{nvidia2011cuda}. Therefore, during the backward propagation, the gradients of different variables could be calculated concurrently on the same GPU such that the time measurement of each variable is not straightforward. To correctly collect the backward propagation time, we design a lightweight profiling tool for backward propagation in PyTorch with sequential execution of different variables. For each tensor that has gradients, we synchronize the tensor after it finishes invoking the gradient computation with CUDA synchronization ($torch.cuda.synchronize$). Consequently, we can collect the time interval of gradient computation between two nearby tensors, and the two nearby tensors with gradients should be from a single layer or from two nearby layers. The measurement can be automatically executed in our MG-WFBP training algorithm during the first several iterations.

\subsection{Parallelism between Gradient Computation and Aggregation}
It is known that current DL frameworks provide Python APIs for end-users. In general, one can use multi-threading or multi-processing to make gradient computation and aggregation executed on two different threads or processes. On one hand, however, there exists the GIL problem \cite{beazley2010understanding} in multi-threading of Python, which would result in very poor performance when paralleling two computation tasks. On the other hand, the multi-processing mechanism requires the memory copy between two processes as the gradients are updated every iteration. Since multiple processes cannot share the GPU memory address, when a process needs to copy its GPU data to another process, it needs to copy the data to host memory and then to another process, which causes performance degradation. To avoid the GIL problem in Python and memory copy between processes, we implement the gradient aggregation in a C++ daemon thread, and the original training codes are kept unchanged and the original training process (forward and backward computation) is running in the main thread. The C++ daemon thread well addresses the GIL problem in Python, and it can share the data of gradient with the main thread so that no memory copy is required. The architecture is shown in Fig. \ref{fig:sysarch}.
\begin{figure}[!h]
	\centering
	\includegraphics[width=\linewidth]{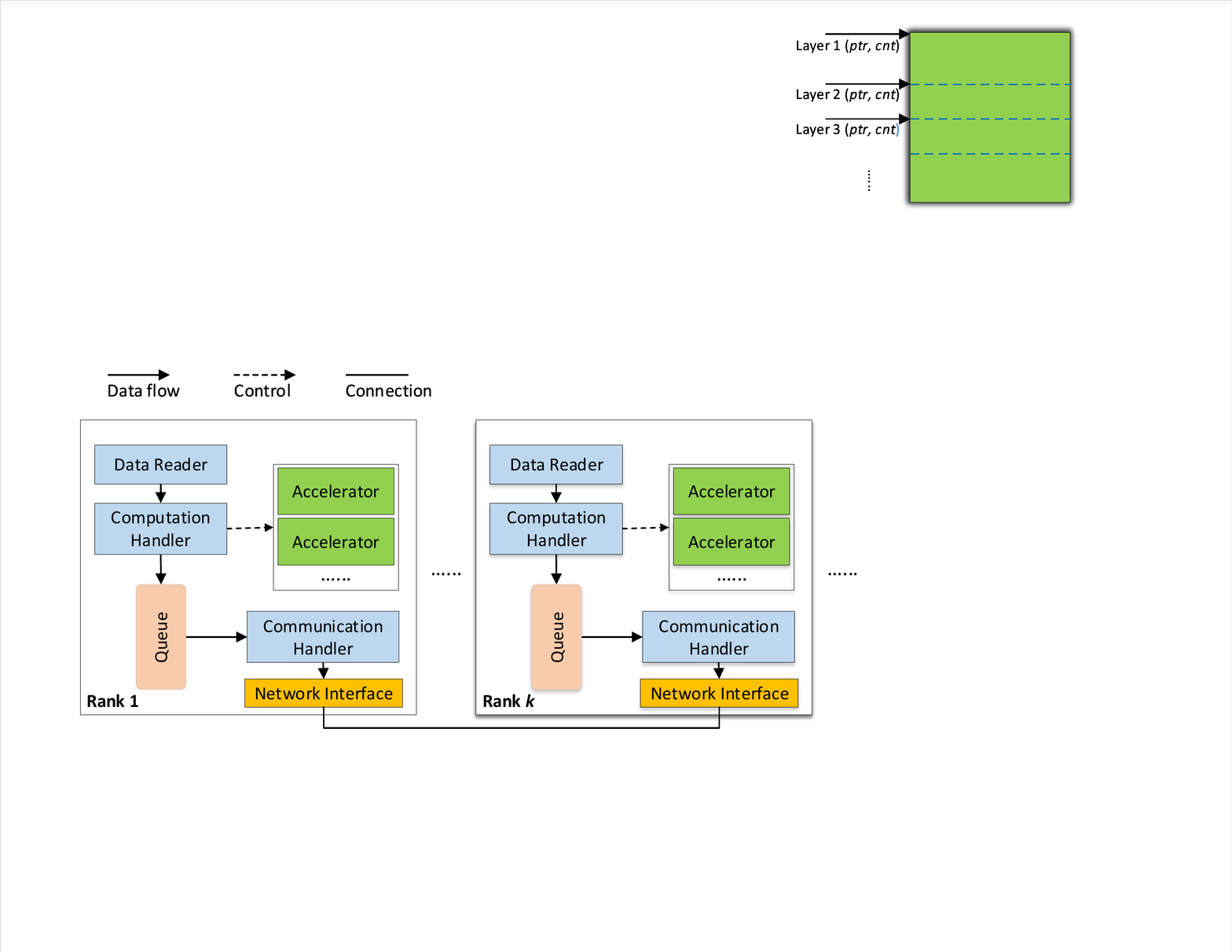}
	\caption{Overview of system architecture.}
	\label{fig:sysarch}
\end{figure}

The key component is the communication handler implemented in a C++ daemon thread. It fetches the gradients from the queue that is shared with the computation handler. During the backpropagation, the computation handler in the main thread puts the gradients from the normal layers to the shared queue, while the communication handler pops the gradients from the queue and communicates with other workers with an all-reduce operation.

\subsection{Efficient Gradient Merging}
For every iteration, we need to copy two layers' gradient data to a single segment of continuous memory if the current layer is a merge-gradient layer. We pre-allocate all memory for merged-gradient layers. For example, we assume layer 2 is a merge-gradient layer, which has $p^{(2)}$ parameters, and layer 1 has $p^{(1)}$ parameters. Note that layer 1 and layer 2 have different tensors so that the memory for these two tensors may not be continuous. Then we allocate a buffer whose size is $(p^{(2)}+p^{(1)})\times BytesPerElement$, where $BytesPerElement$ is 4 for single precision floats and 2 for half precision floats (e.g., Mixed precision training). Therefore, for every merged-gradient layers and their preceded normal layers, there exist pre-allocated buffers. When any buffer is full, the gradient aggregation thread invokes the all-reduce operation. The pre-allocated buffers for merged-gradient layers consume the same memory size as the model parameters, but it is relatively small as compared to the occupied memory of the temporary outputs of hidden layers. In PyTorch, the data copy between GPU tensors is fast as it just needs to copy data in GPU memory without copying back to host memory. For example, Nvidia Tesla V100 GPU delivers a peak memory bandwidth of 900GB/s.

\section{Experimental Studies}\label{s:eval}

\subsection{Experimental Settings}
We conduct extensive experimental studies to show the effectiveness of MG-WFBP. Our test-beds contain three GPU clusters with 10Gbps Ethernet (10GbE) and 56Gbps InfiniBand (56GbIB) interconnections. One is an 8-node Nvidia Tesla K80 cluster which has a total of 16 GK210 GPUs (one Tesla K80 card contains two GK210 GPUs), and the 8 nodes are connected by 10GbE; the other two are 4-node Nvidia Tesla V100 clusters, in which each node contains 4 GPUs, resulting in a total of 16 GPUs, and the 4 nodes are connected with 10GbE and 56GbIB. The cluster settings are listed in Table \ref{table:clusters}. 
\begin{table}[!ht]
	\centering
	\caption{The hardware and software settings on one node.}
	\label{table:clusters}
	\begin{tabular}{|l|c|c|c|}
		\hline
		 & 	Cluster 1 & Cluster 2 & Cluster 3 \\\hline
		\hline
		\# of Nodes & 	8 &  \multicolumn{2}{c|}{4}  \\\hline
		GPU (Nvidia) & 	Tesla K80 &  \multicolumn{2}{c|}{Tesla V100 PCIe x4}  \\\hline
		Network		&	10GbE & 10GbE & 56GbIB \\\hline
		PCIe        &  \multicolumn{3}{c|}{PCI Express Gen3 x16}  \\\hline
		CPU	(Intel) &	Xeon E5-2650v4 Dual & \multicolumn{2}{c|}{Xeon E5-2698v3 Dual}\\\hline
		Memory		&	\multicolumn{3}{c|}{256 GB} 	\\\hline
		OS & CentOS-7.2 & \multicolumn{2}{c|}{Ubuntu 16.04} \\\hline
		Software & CUDA-8.0 & \multicolumn{2}{c|}{CUDA-10.0} \\\cline{2-4}
		         & OpenMPI-3.1.1 & \multicolumn{2}{c|}{OpenMPI-4.0.0} \\\cline{2-4}
		         & NCCL-2.2.12 & \multicolumn{2}{c|}{NCCL-2.3.7} \\\hline
	\end{tabular}
\end{table}

First, we conduct experiments to measure the communication performance on the three clusters. Second, we evaluate the end-to-end training wall-clock time on representative real-world DNN models including GoogleNet \cite{szegedy2015going}, ResNet-50/152 \cite{he2016deep}, DenseNet-161/201 \cite{huang2017densely} and Inception-v4 \cite{szegedy2017inception} with the ImageNet dataset ILSVRC-2012 \cite{deng2009imagenet} which contains about $1.28$ million training images and $50,000$ validation images of $1,000$ categories. The resolution of the input images is $224\times224$. The training settings of DNN models are listed in Table \ref{table:dnns}. 

\begin{table}[!ht]
% 	\centering
	\caption{DNNs for evaluation.}
	\label{table:dnns}
	\begin{tabular}{|l|l|l|l|l|}
		\hline
		Model& 	\# Tensors &\# Parameters  & \# MACs & Batch Size   \\\hline
		\hline
		GoogleNet &59 & \textasciitilde 13M	& 1.43G & 64\\\hline
		ResNet-50 &161 & \textasciitilde 25.5M	& 3.9G & 32 \\\hline
		ResNet-152 & 467 & \textasciitilde 60.1M & 11.61G & 128 \\\hline
		DenseNet-161 &484 & \textasciitilde 28.6M	& 7.85G & 64 \\\hline
		DenseNet-201 &604 & \textasciitilde 20M	& 4.39G & 64 \\\hline
		Inception-v4 &449 & \textasciitilde 42.6M & 6.16G & 128 \\\hline
	\end{tabular}
	Note: \# MACs indicates the number of multiply and accumulates in the forward calculation with a batch size of 1.
\end{table}

\subsection{Measurement of All-reduce Communication}
\begin{figure*}[!ht]
	\centering
    \begin{subfigure}{0.32\textwidth}
		\includegraphics[width=\linewidth]{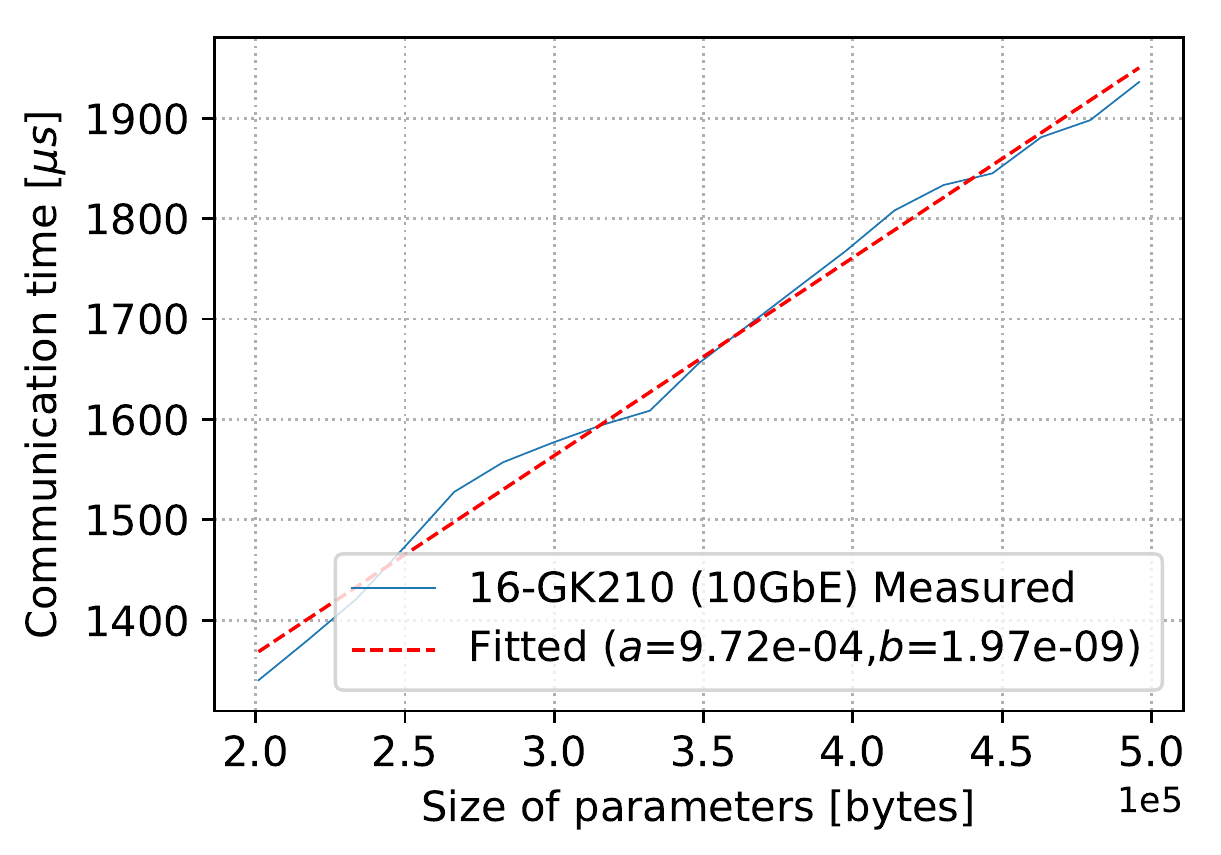}
		\caption{Cluster 1}
	\end{subfigure}
    \begin{subfigure}{0.32\textwidth}
		\includegraphics[width=\linewidth]{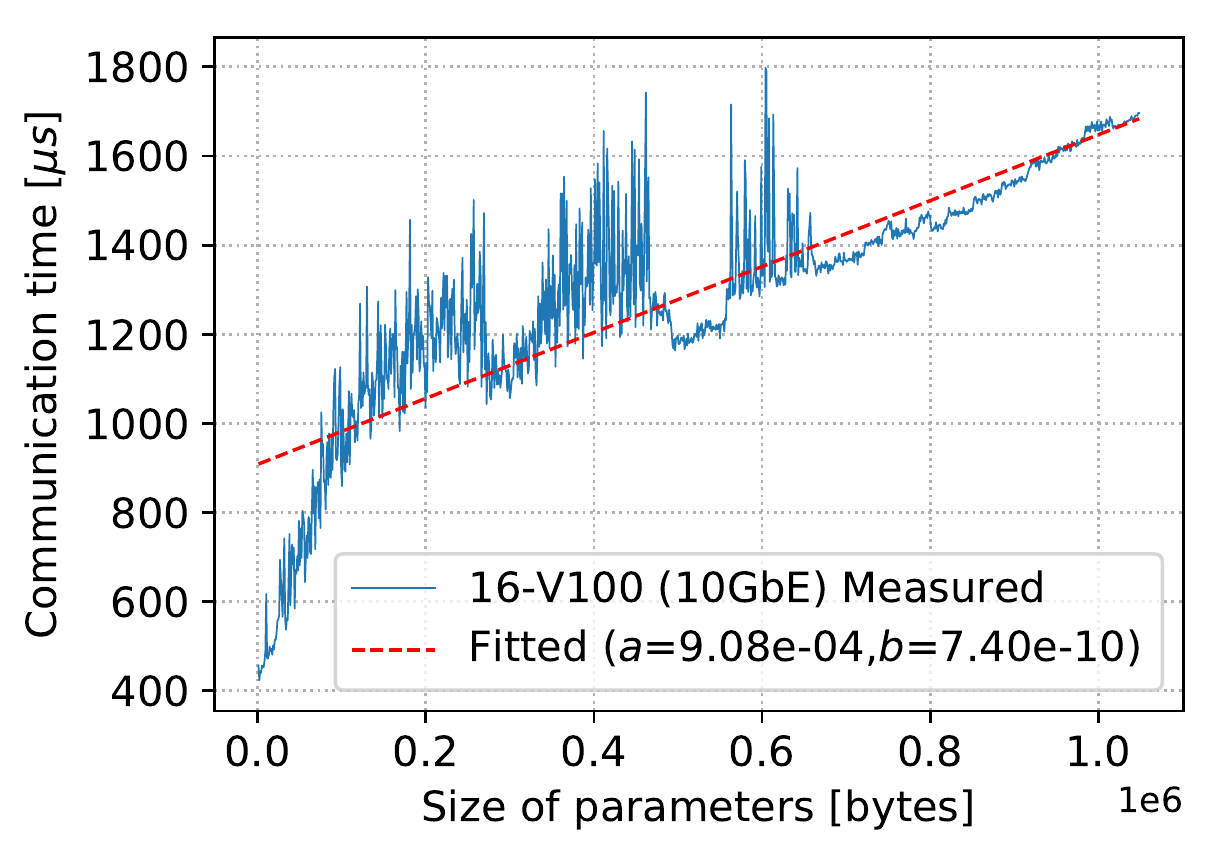}
		\caption{Cluster 2}
	\end{subfigure}
	\begin{subfigure}{0.32\textwidth}
		\includegraphics[width=\linewidth]{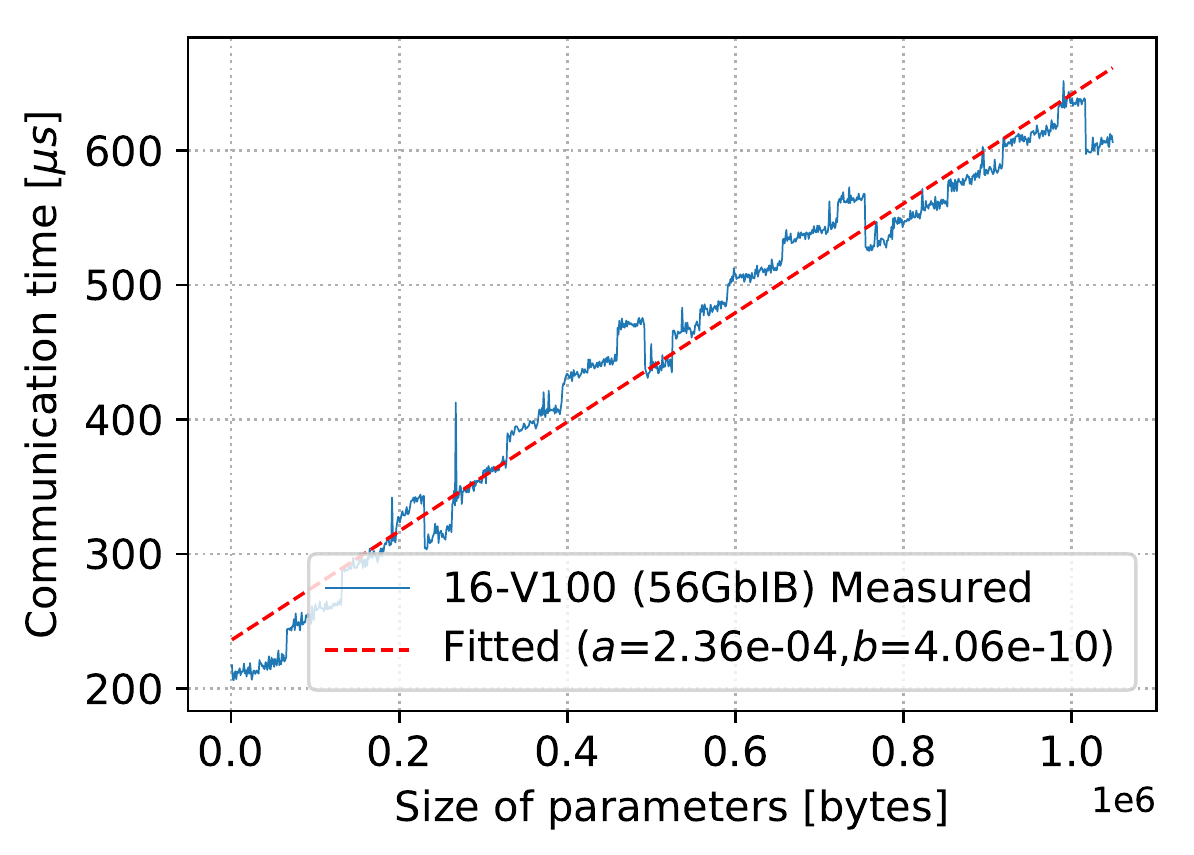}
		\caption{Cluster 3}
	\end{subfigure}
	\caption{The communication time of all-reduce along with the size of parameters on three different clusters. (a) Cluster 1 with $a=9.72\times 10^{-4},b=1.97\times 10^{-9}$; (b) Cluster 2 with $a=9.08\times 10^{-4},b=7.4\times 10^{-10}$; (c) Cluster 3 with $a=2.36\times 10^{-4},b=4.06\times 10^{-10}$.}
	\label{fig:commoverhead}
\end{figure*}

To verify the communication model in Eq. (\ref{equ:tcomm}) empirically, we first present some foregone results of the time of the all-reduce operation in the three configured clusters. The measured time of all-reduce under cluster 1, cluster 2 and cluster 3 are shown in Fig. \ref{fig:commoverhead}(a), \ref{fig:commoverhead}(b) and \ref{fig:commoverhead}(c) respectively. Take the size of parameters ($4p$ in single precision floating points) as the variable, we can see that the startup overheads (e.g., $2(N-1)\times \alpha$ in the ring-based all-reduce algorithm) are $972\mu s$, $908\mu s$ and $236\mu s$ on cluster 1, cluster 2 and cluster 3 respectively. 

We also show the statistical distributions of the layer-wise tensor size in different DNNs in Fig. \ref{fig:tensordistribution}, which shows that a large proportion of tensors are with small number of gradients. For example, ResNet-152 has 150 tensors whose size is 1024 bytes (in 32-bit precision), and DenseNet-161 has 160 tensors whose size is 768 bytes (in 32-bit precision).
\begin{figure}[!h]
	\centering
	\includegraphics[width=0.95\linewidth]{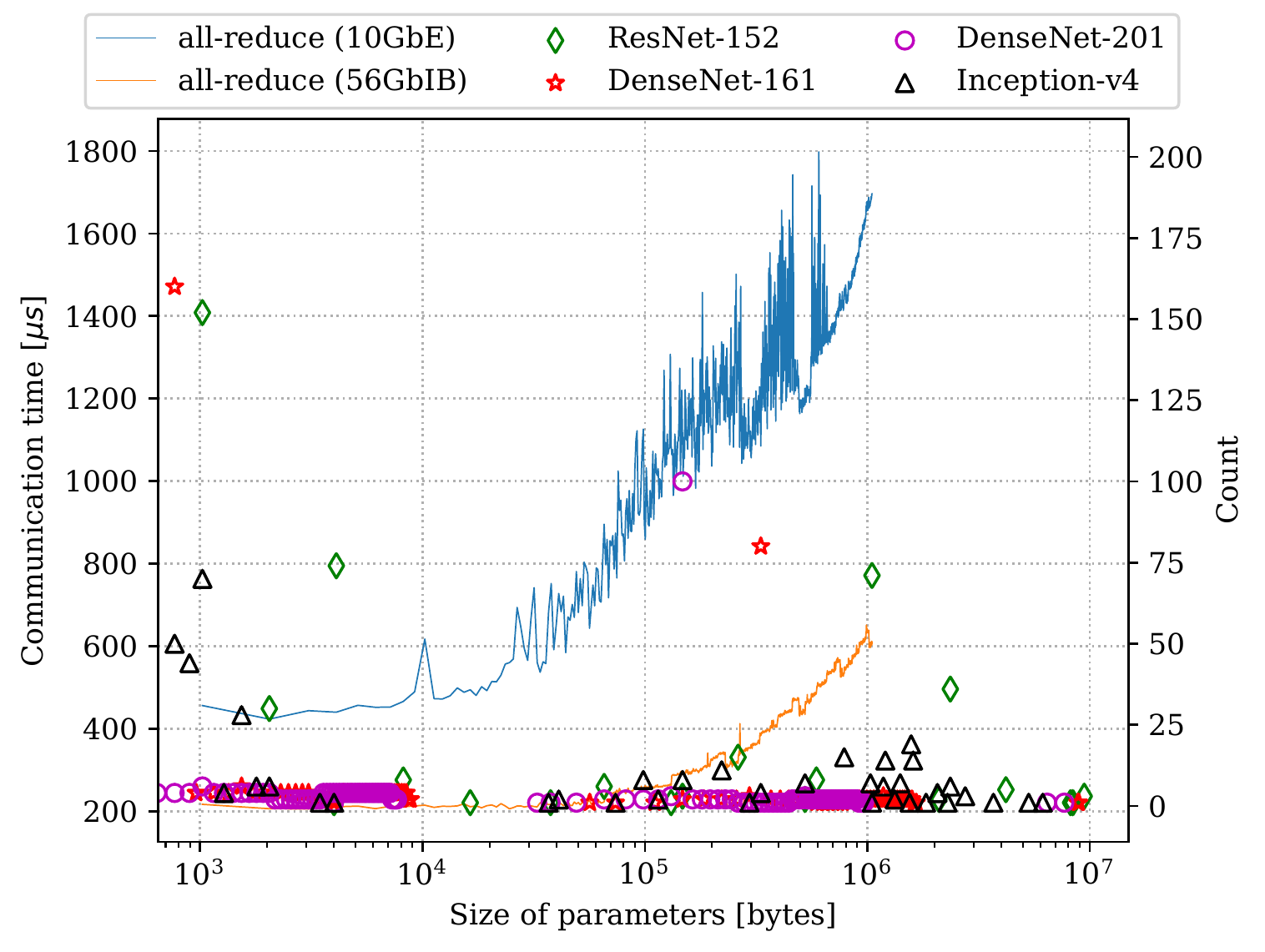}
	\caption{Tensor size distribution. The two curves are the all-reduce communication time with respect to tensor size, and the scatter markers indicate the number of tensors that have a specific size in a DNN.}
	\label{fig:tensordistribution}
\end{figure}

\subsection{Real-world Experiments}
We implement WFBP \cite{awan2017s}\cite{zhang2017poseidon}, single-layer communication Sync EASGD (SyncEASGD) \cite{you2017scaling} and our proposed MG-WFBP with PyTorch and OpenMPI, and test the performance across two 16-GPU (K80 and V100) clusters with 10GbE and 56GbIB. We also compare the scaling efficiencies with TensorFlow. The compared TensorFlow version is at v1.3, and it uses parameter servers to do S-SGD using the official benchmark script\footnote{https://github.com/tensorflow/benchmarks}. We also run 13 epochs to verify the convergence of the model training, in which $50,000$ images are used to test the top-1 accuracy.

\subsubsection{Results on Cluster 1}
\begin{figure}[!h]
	\centering
    \begin{subfigure}{0.24\textwidth}
		\includegraphics[width=\linewidth]{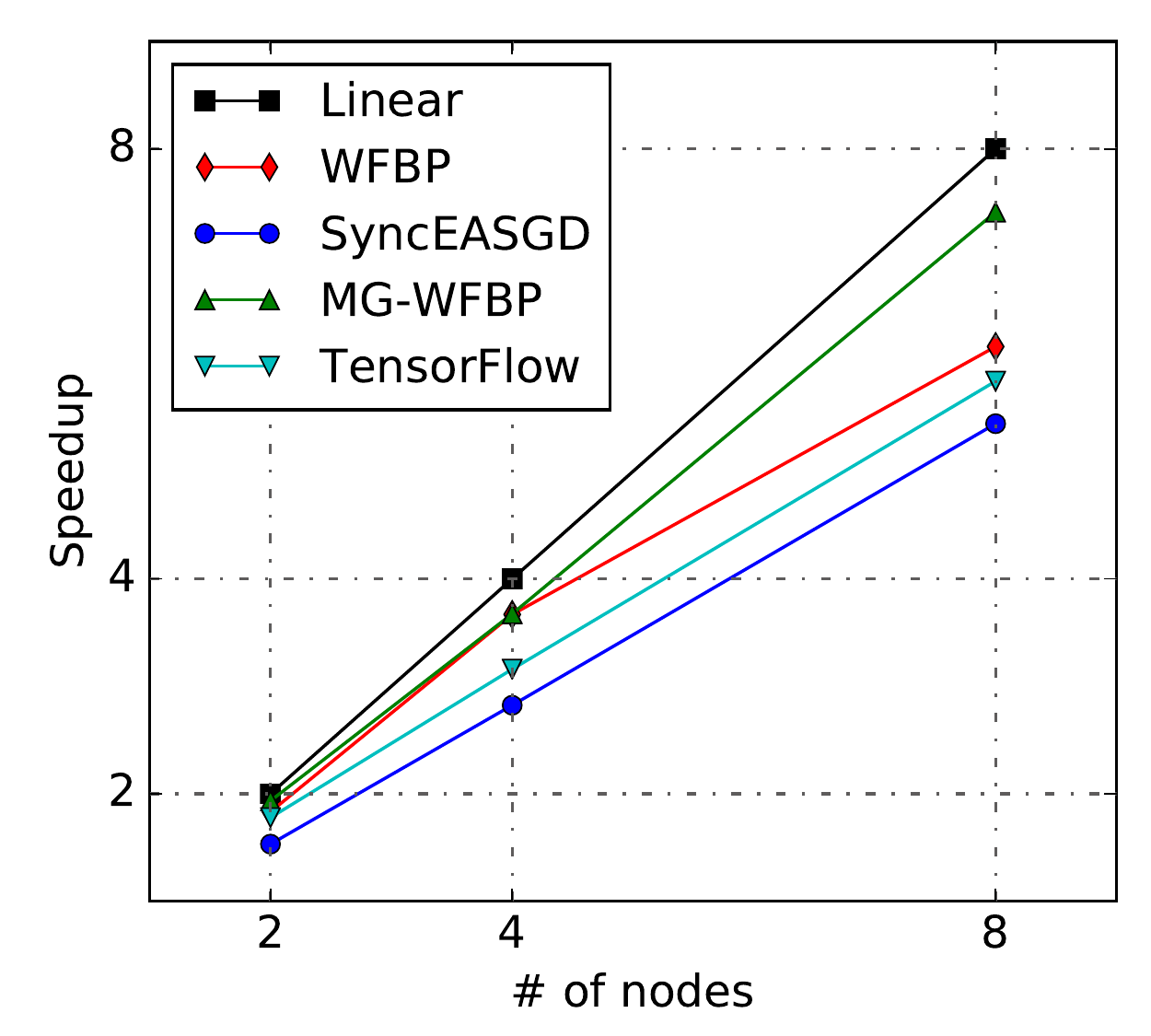}
		\caption{Speedup}
	\end{subfigure}
	\begin{subfigure}{0.24\textwidth}
		\includegraphics[width=\linewidth]{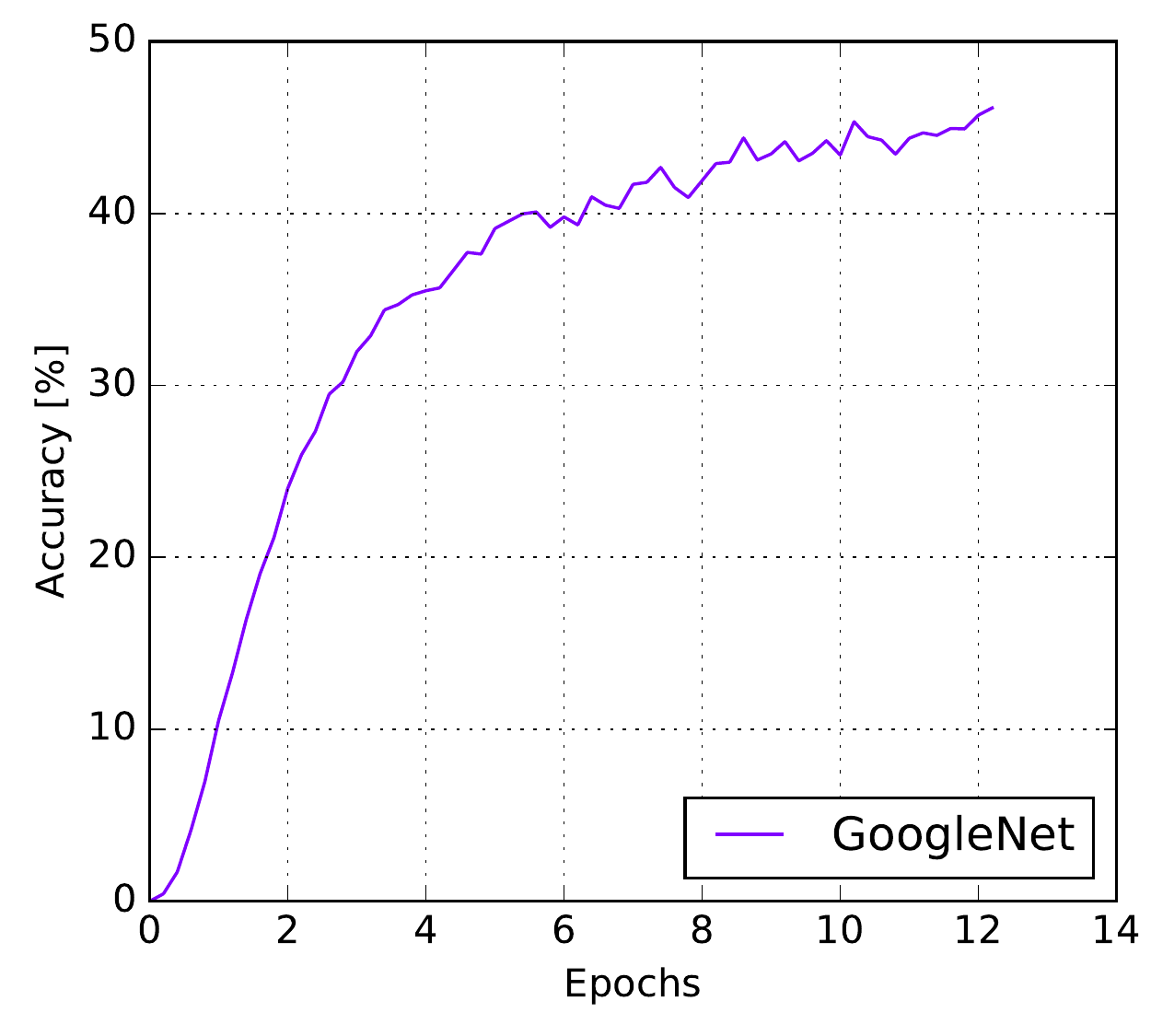}
		\caption{Top-1 validation accuracy}
	\end{subfigure}
% 	\vspace{-5pt}
	\caption{The performance of GoogleNet on the K80 cluster with 10GbE. The baseline of the speedup of SGD is on a single machine with 2 GPUs.}
	\label{fig:realresultsgooglenet}
% \vspace{-10pt}
\end{figure}

\begin{figure}[!h]
	\centering
    \begin{subfigure}{0.24\textwidth}
		\includegraphics[width=\linewidth]{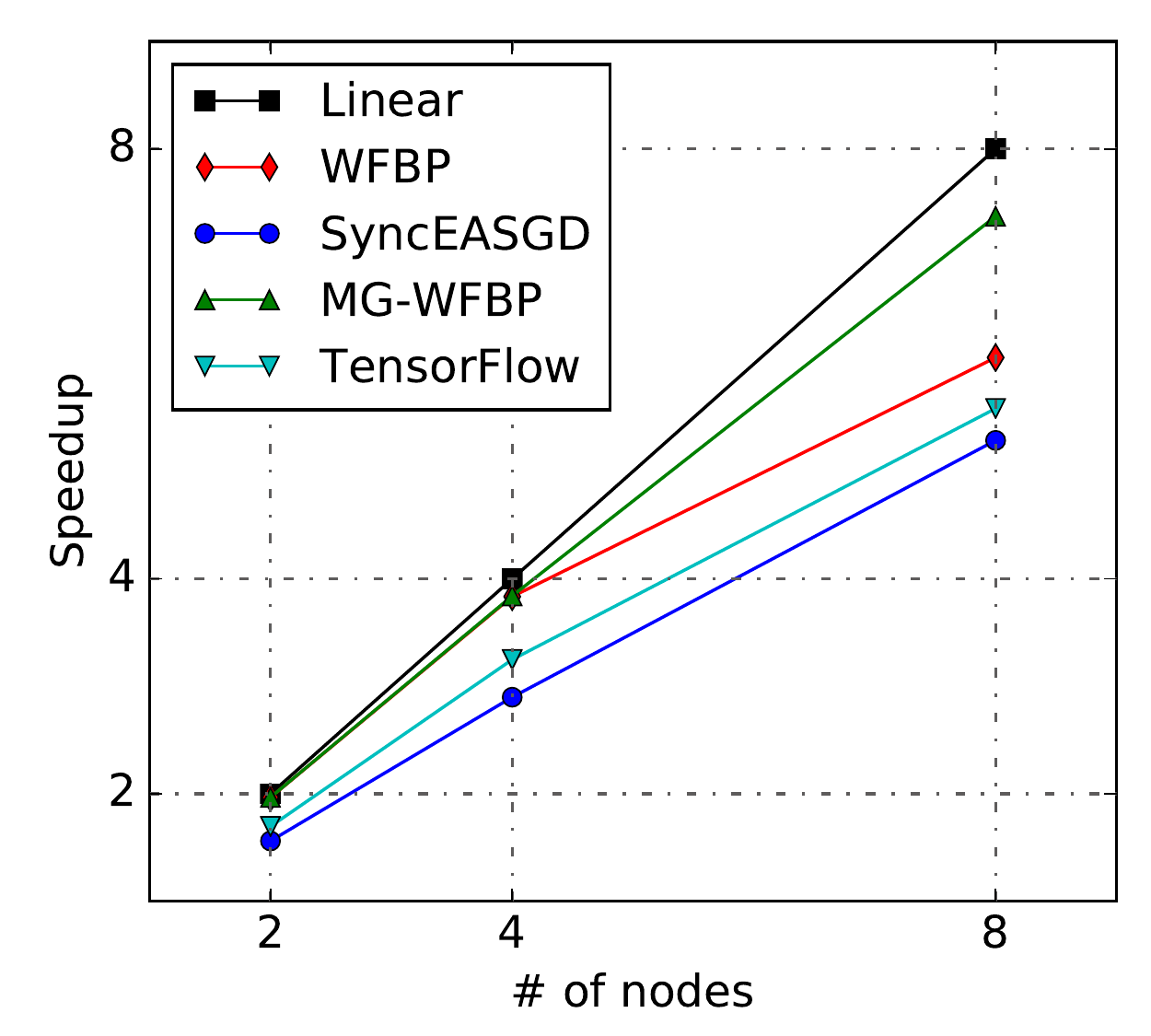}
		\caption{Speedup}
	\end{subfigure}
	\begin{subfigure}{0.24\textwidth}
		\includegraphics[width=\linewidth]{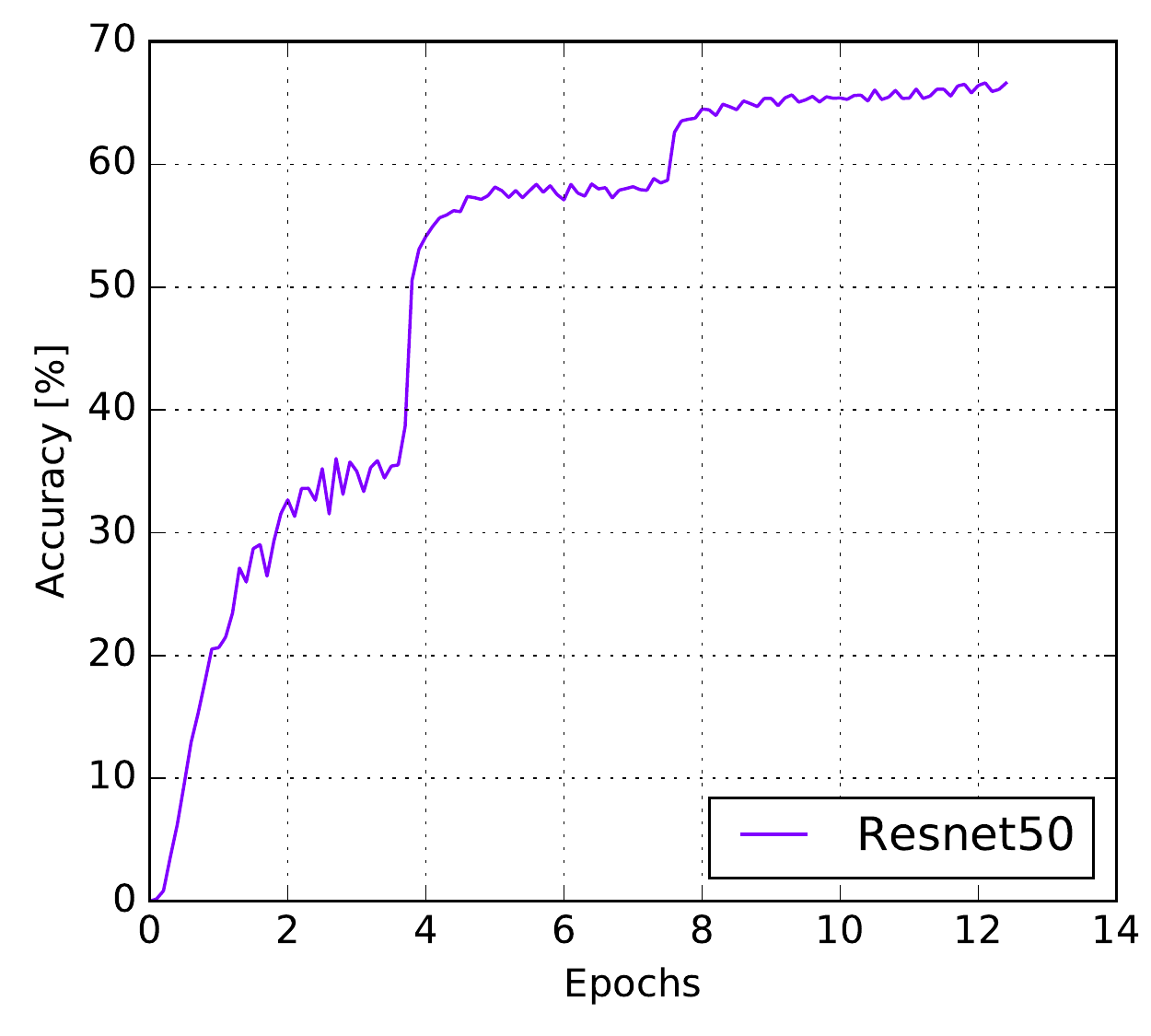}
		\caption{Top-1 validation accuracy}
	\end{subfigure}
	\caption{The performance of ResNet-50 on the K80 cluster with 10GbE. The baseline of the speedup of SGD is on a single machine with 2 GPUs.}
	\label{fig:realresultsresnet}
\end{figure}

\begin{figure}[!h]
	\centering
	\begin{subfigure}{0.24\textwidth}
		\includegraphics[width=\linewidth]{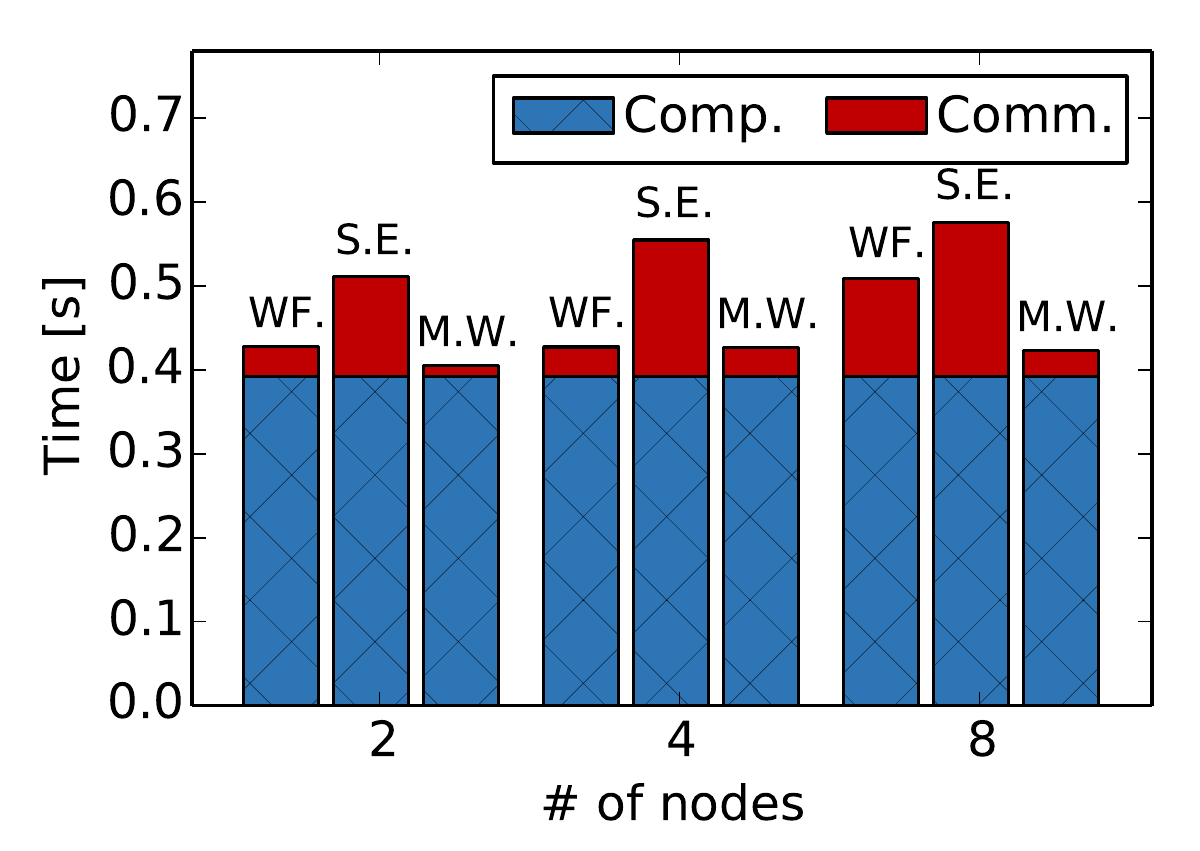}
		\caption{GoogleNet}
	\end{subfigure}
	\begin{subfigure}{0.24\textwidth}
		\includegraphics[width=\linewidth]{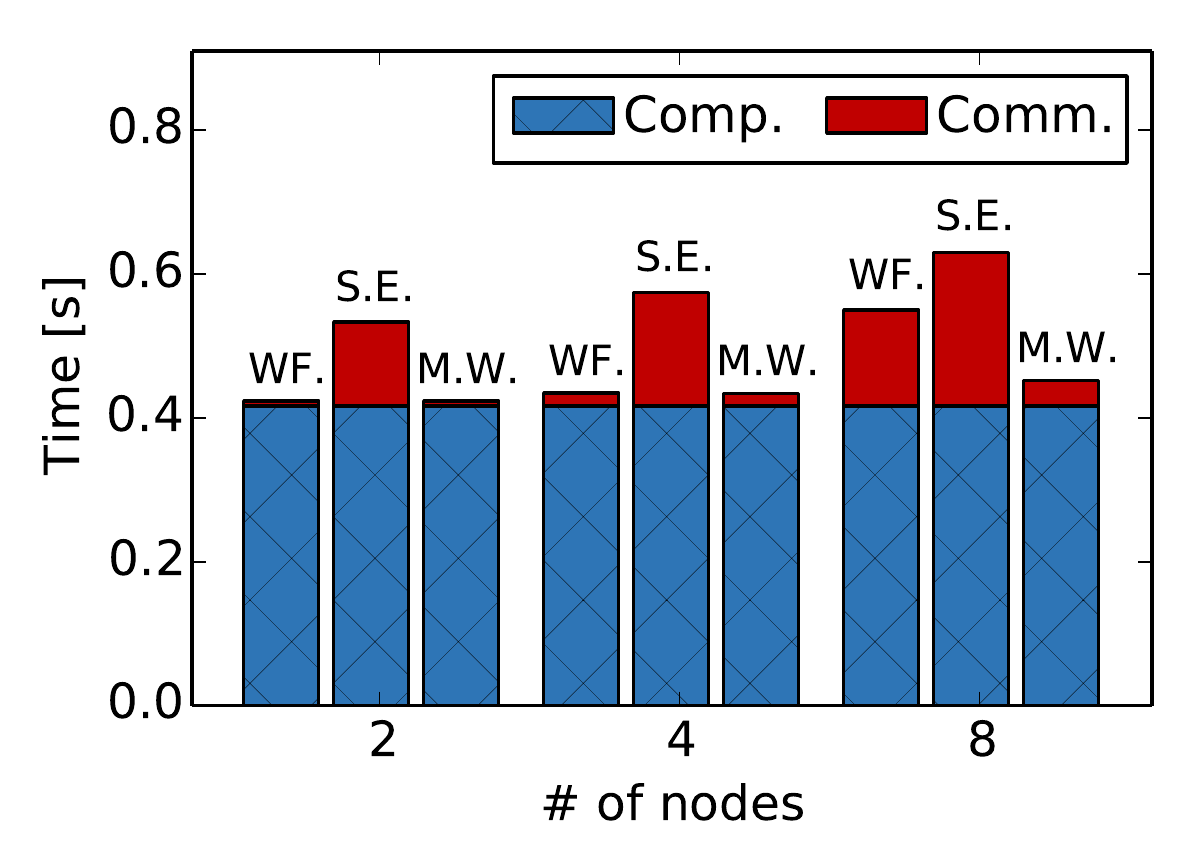}
		\caption{ResNet-50}
	\end{subfigure}
	\caption{Time costs of non-overlapped communication and computation. `WF.', `S.E.' and `M.W.' indicate WFBP, SyncEASGD and MG-WFBP algorithms respectively. `Comp.' refers to the computation cost (i.e., $t_f+t_b$), and `Comm.' refers to the non-overlapped communication cost (i.e., $t_c^{no}$).}
	\label{fig:realcomm}
\end{figure}

The experimental results of GoogleNet and ResNet-50 on the GK210 GPU cluster are shown in Fig. \ref{fig:realresultsgooglenet} and Fig. \ref{fig:realresultsresnet} respectively. The non-overlapped communication cost compared to the computation time is shown in Fig. \ref{fig:realcomm}. The baseline is the iteration throughput of two GPUs in a single machine, in which no communication is required. And the speedup of throughput on multiple workers are compared to the baseline. From Fig. \ref{fig:realcomm}, we can observe that for both GoogleNet and ResNet-50, MG-WFBP performs better than WFBP, SyncEASGD and TensorFlow. SyncEASGD dose not overlap the communication with computation; and hence the communication cost increases when the number of workers increases. As a consequence, the scaling efficiency of SyncEASGD is poor. WFBP achieves near linear scaling on 2 and 4 nodes, in which the non-overlapped communication overhead are small. When scaling to 8 nodes, however, WFBP has an obvious drop in efficiency due to the increased startup time of layer-wise communication which cannot be totally hidden by computation. Regarding the performance of TensorFlow, it uses parameter servers to do the model aggregation. On one hand, the centralized parameter server based algorithm could easily suffer a bandwidth pressure in the parameter server on the lower speed network \cite{zhang2017poseidon}. On the other hand, it takes two communication directions (workers to PS, and PS to workers) to finish the model synchronization, which introduces more overhead in the synchronization pass. Therefore, though TensorFlow exploits the WFBP technique, the PS-based method performs worse than the decentralized method. Our proposed algorithm has a very small non-overlapped communication cost even on the 8-node cluster, so the scaling efficiency is still close to linear. In summary, MG-WFBP achieves about $1.2$x and $1.36$x speedups compared to WFBP and SyncEASGD respectively on the 8-node (16 GPUs) K80 cluster on both GoogleNet and ResNet-50.

\begin{figure*}[!ht]
\captionsetup[subfigure]{justification=centering}
	\centering
	\begin{subfigure}{0.24\textwidth}
		\includegraphics[width=\linewidth]{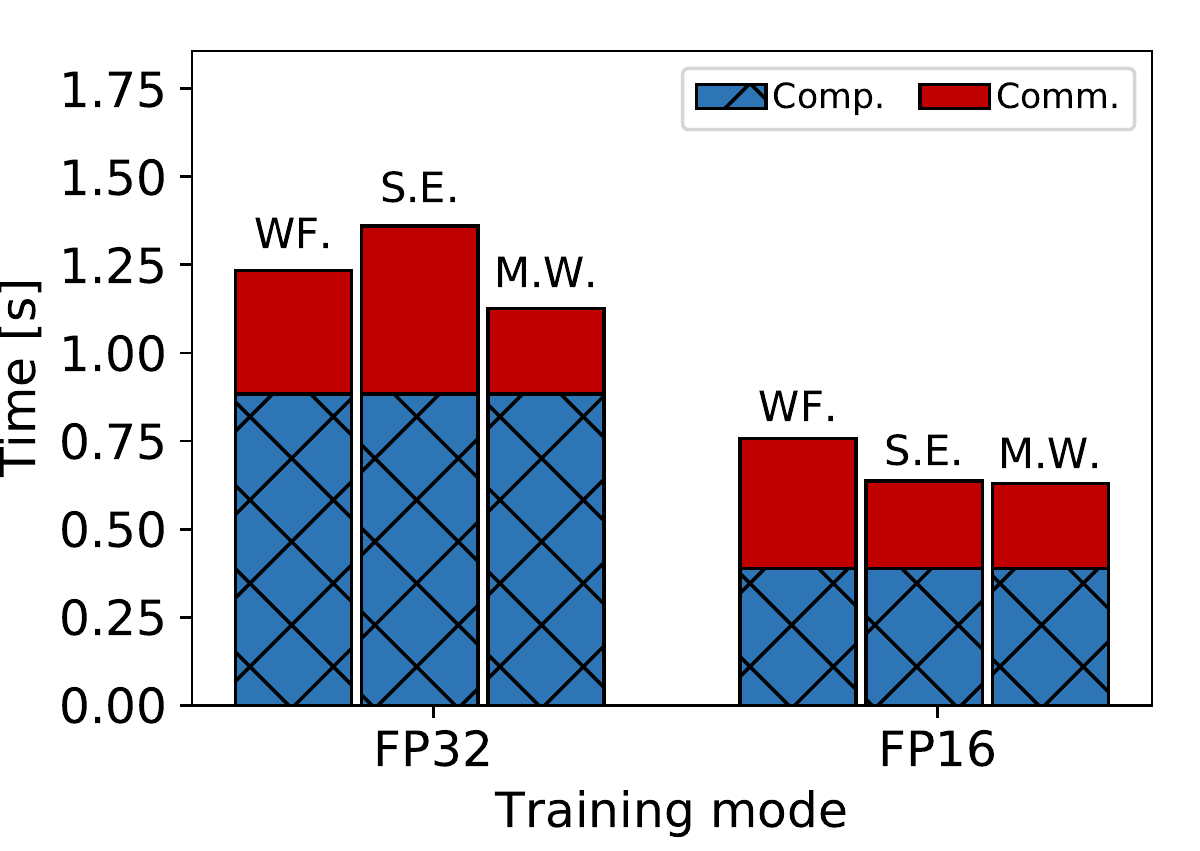}
		\caption{ResNet-152 with 10GbE \\(1\%-20\%)}
	\end{subfigure}
	\begin{subfigure}{0.24\textwidth}
		\includegraphics[width=\linewidth]{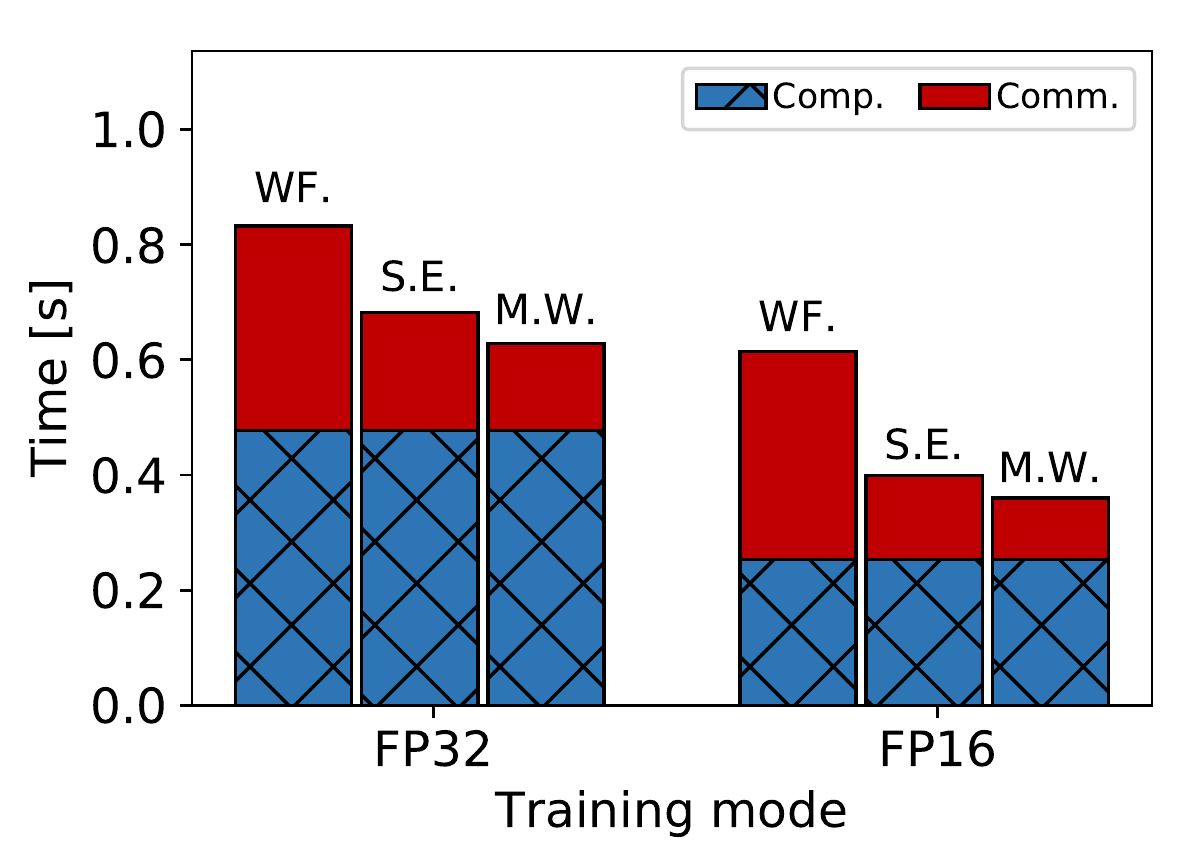}
		\caption{DenseNet-161 with 10GbE \\(7\%-70\%)}
	\end{subfigure}
	\begin{subfigure}{0.24\textwidth}
		\includegraphics[width=\linewidth]{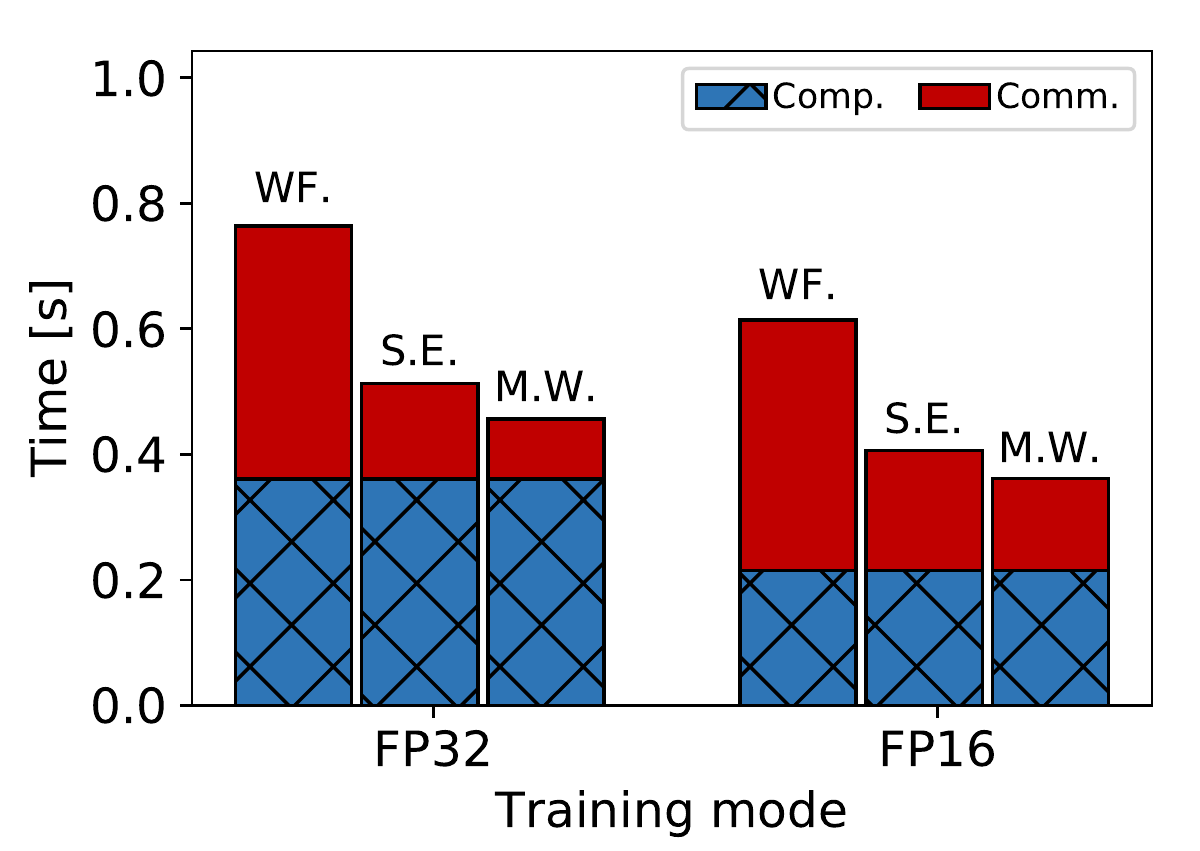}
		\caption{DenseNet-201 with 10GbE \\(7\%-69\%)}
	\end{subfigure}
	\begin{subfigure}{0.24\textwidth}
		\includegraphics[width=\linewidth]{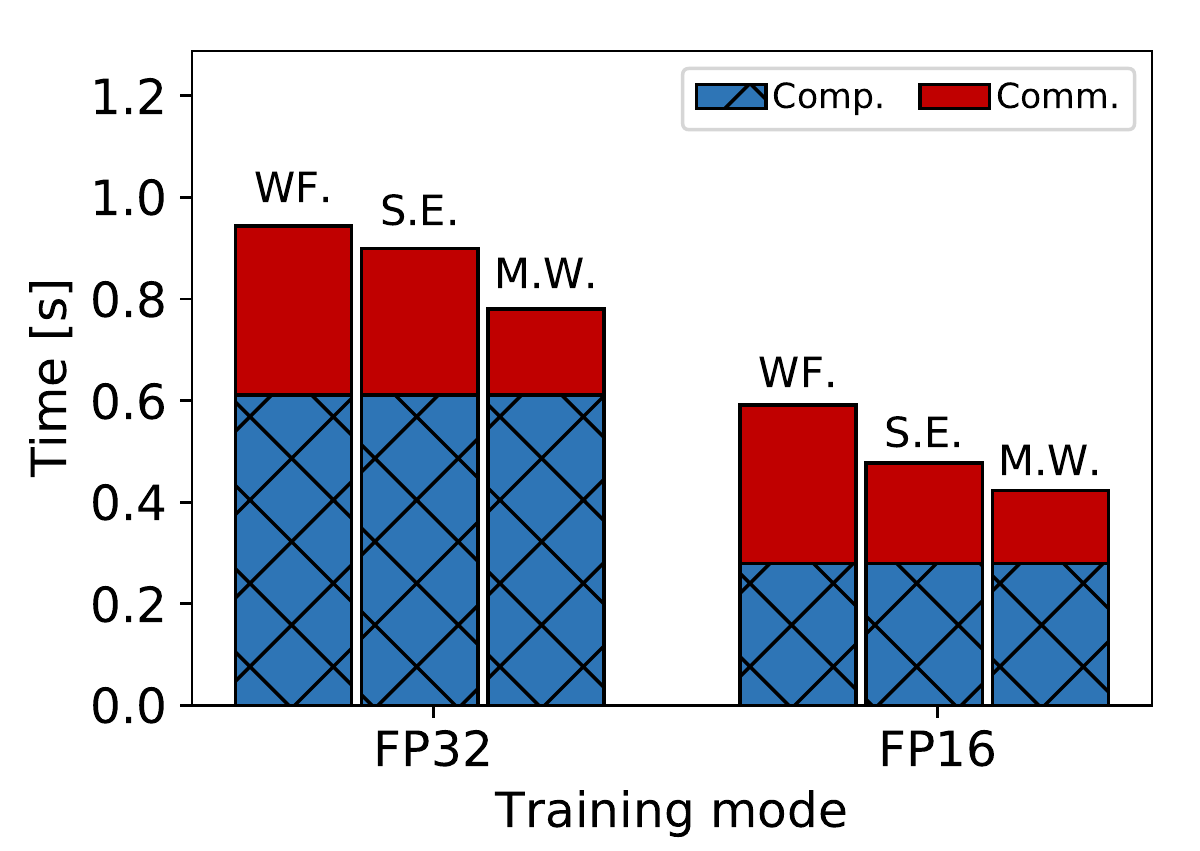}
		\caption{Inception-v4 with 10GbE \\(12\%-39\%)}
	\end{subfigure}
	
	\begin{subfigure}{0.24\textwidth}
		\includegraphics[width=\linewidth]{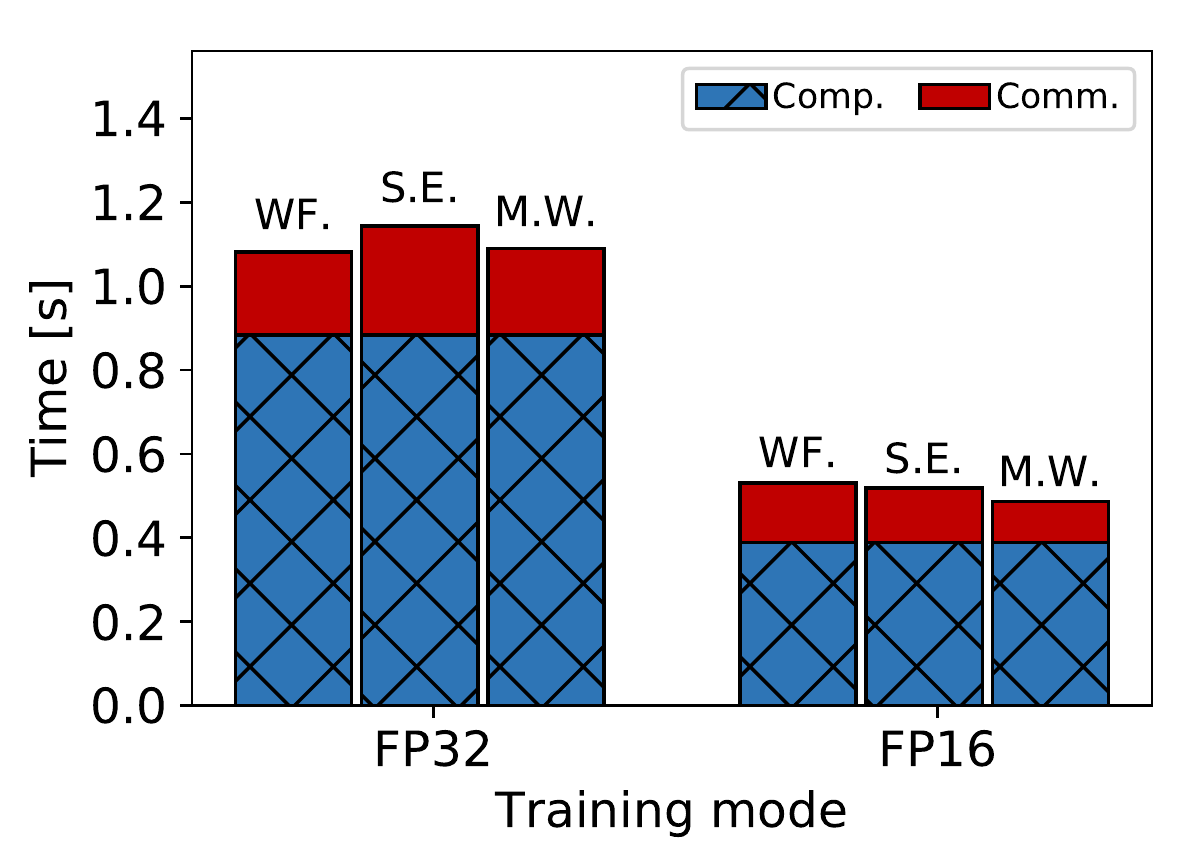}
		\caption{ResNet-152 with 56GbIB \\(2\%-9\%)}
	\end{subfigure}
	\begin{subfigure}{0.24\textwidth}
		\includegraphics[width=\linewidth]{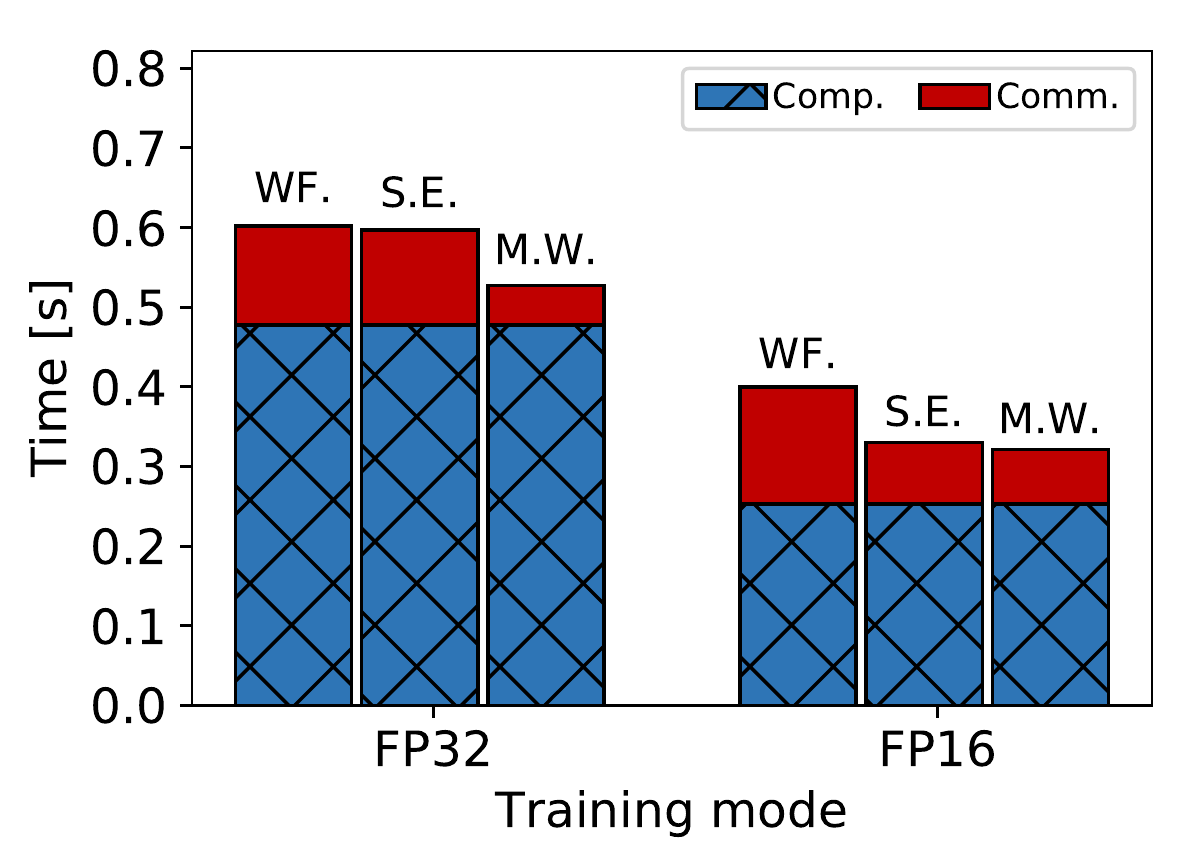}
		\caption{DenseNet-161 with 56GbIB \\(2\%-24\%)}
	\end{subfigure}
	\begin{subfigure}{0.24\textwidth}
		\includegraphics[width=\linewidth]{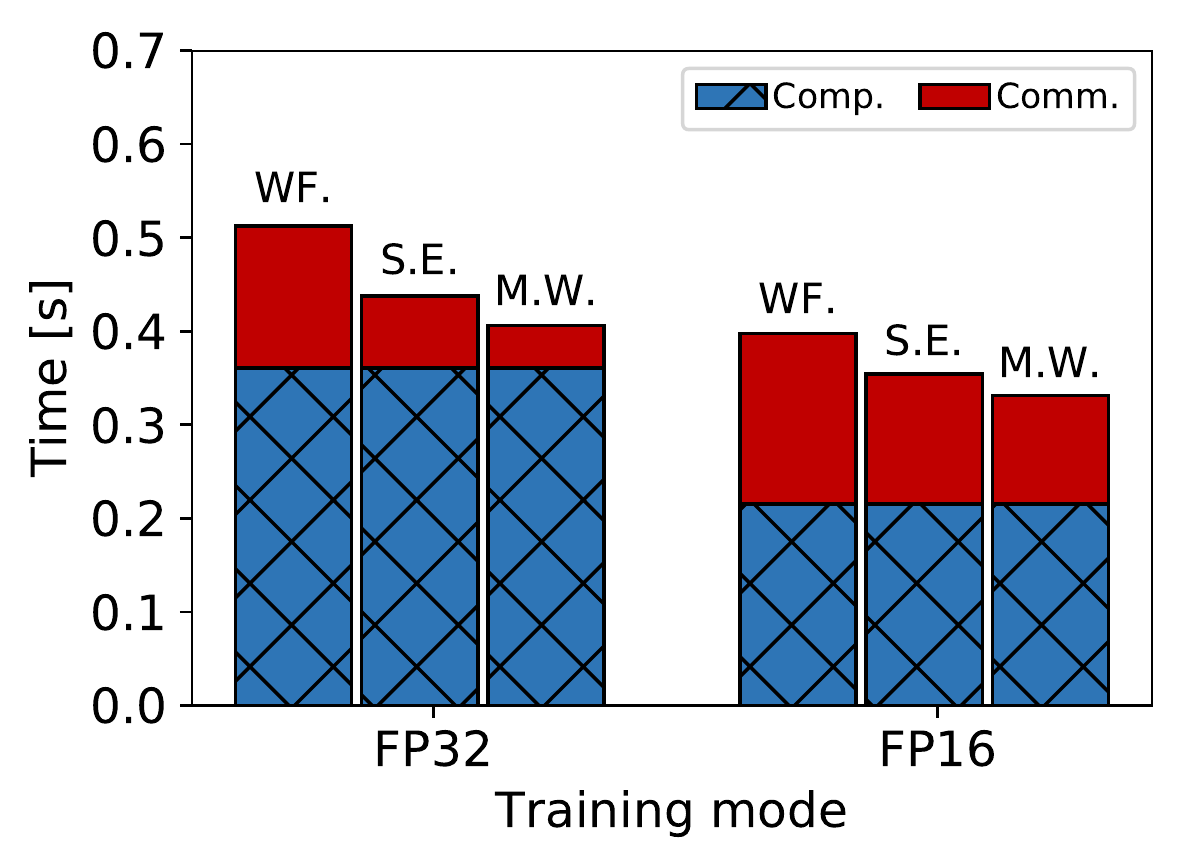}
		\caption{DenseNet-201 with 56GbIB \\(6\%-26\%)}
	\end{subfigure}
	\begin{subfigure}{0.24\textwidth}
		\includegraphics[width=\linewidth]{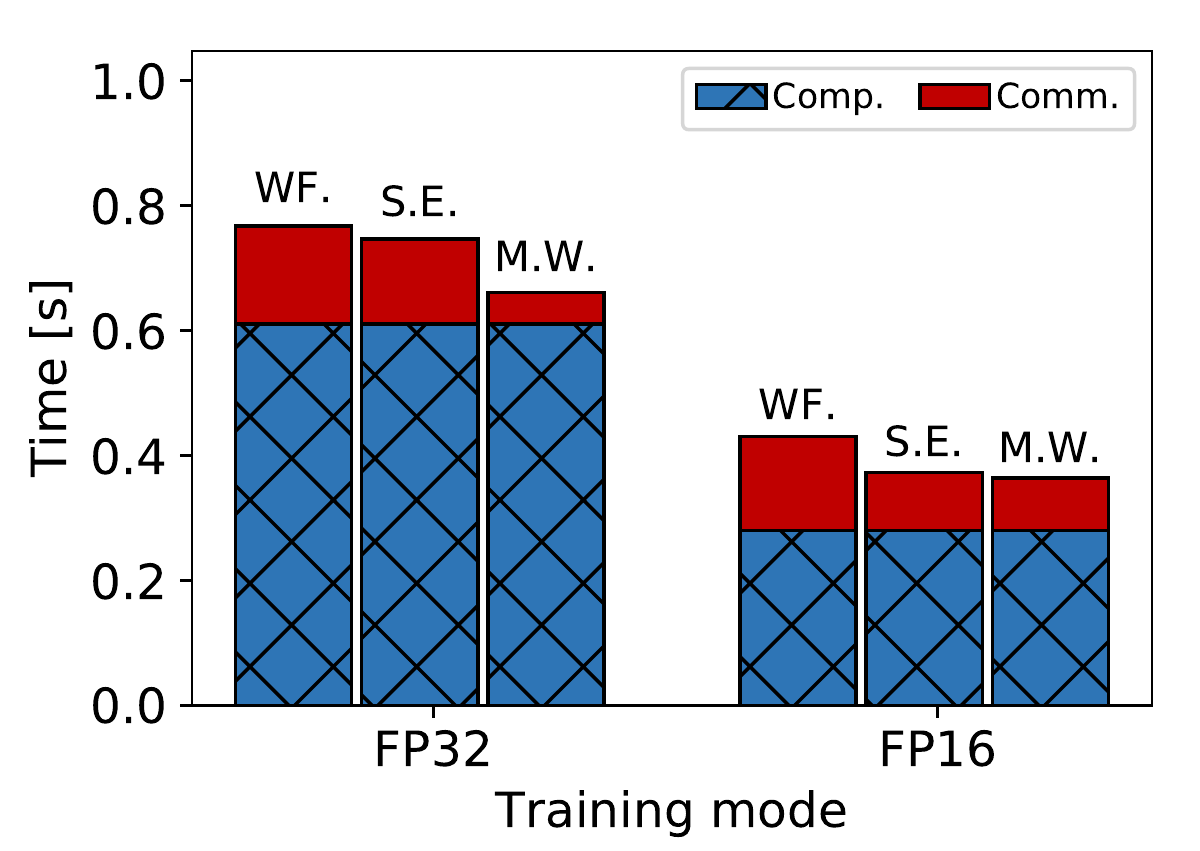}
		\caption{Inception-v4 with 56GbIB \\(2\%-18\%)}
	\end{subfigure}
	\caption{Time comparison of non-overlapped communication and computation on the two V100 GPU clusters (10GbE and 56GbIB). `WF.', `S.E.' and `M.W.' indicate WFBP, SyncEASGD and MG-WFBP algorithms respectively. `Comp.' refers to the computation cost (i.e., $t_f+t_b$), and `Comm.' refers to the non-overlapped communication cost (i.e., $t_c^{no}$). The values in the brackets are the range of improvements of MG-WFBP over WFBP and SyncEASGD.}
	\label{fig:v100results}
\end{figure*}
\subsubsection{Results on Cluster 2 and Cluster 3}
Note that MG-WFBP has no side-effect on the convergence performance (in terms of the number of iterations) as MG-WFBP can achieve consistent results of the aggregated gradients with the original S-SGD at each iteration. Therefore, in the following performance evaluation, we focus on the comparison on the average iteration wall-clock time to demonstrate how much performance improvement of our MG-WFBP over WFBP and SyncEASGD. 

On cluster 2 and cluster 3, in addition to the general setting with single precision (FP32) training, we also apply our MG-WFBP algorithm to the mixed precision training technique \cite{micikevicius2018mixed}, which is widely used on the GPUs with Tensor Cores (e.g., Tesla V100) to increase the computing efficiency and reduce the communication traffic. The results are shown in Fig. \ref{fig:v100results}. In overall, it can be seen that for different DNN models, no one always outperforms the other one between WFBP and SyncEASGD algorithms as the both algorithms are sensitive to the cluster configurations, while our proposed MG-WFBP algorithm achieves the fastest training speed in all evaluated DNNs. The first row of Fig. \ref{fig:v100results} shows that MG-WFBP achieves up to $70\%$ improvement over WFBP and SyncEASGD algorithms on Cluster 2 with 10GbE connection. The second row of Fig. \ref{fig:v100results} demonstrates that MG-WFBP outperforms WFBP and SyncEASGD up to $26\%$ on Cluster 3 with 56GbIB connection.

On the ResNet-152 architecture, pipelining all FP32 tensors brings some benefits to hide some communication overheads so that WFBP trains faster than SyncEASGD. On both DenseNet and Inception architectures, however, pipelining for every tensors between communication and computation introduces many extra communication overheads so that WFBP performs slower training speed than SyncEASGD. On the ResNet-152 architecture with FP32 precision, the hidden communication time is longer than the extra time introduced by each layer's startup overhead with pipelining so that WFBP is about $10\%$ faster than SyncEASGD. Our MG-WFBP algorithm can further reduce the negative impact of the startup time by smartly merging some gradients, which results in extra $10\%$ improvement. On the other hand, pipelining all tensors introduces larger overheads than hidden time. For example, SyncEASGD is $20\%$ faster than WFBP in DenseNet-161. By merging the tensors smartly, MG-WFBP performs $7\%$ faster than SyncEASGD. 

In summary, MG-WFBP can always outperform WFBP and SyncEASGD. In the conducted extensive experiments, MG-WFBP generally achieves up to $15\%$ improvement over the best of WFBP and SyncEASGD in both 10GbE and 56GbIB interconnections.

\subsection{Simulation}
\begin{figure}[!ht]
	\centering
    \begin{subfigure}{0.24\textwidth}
		\includegraphics[width=\linewidth]{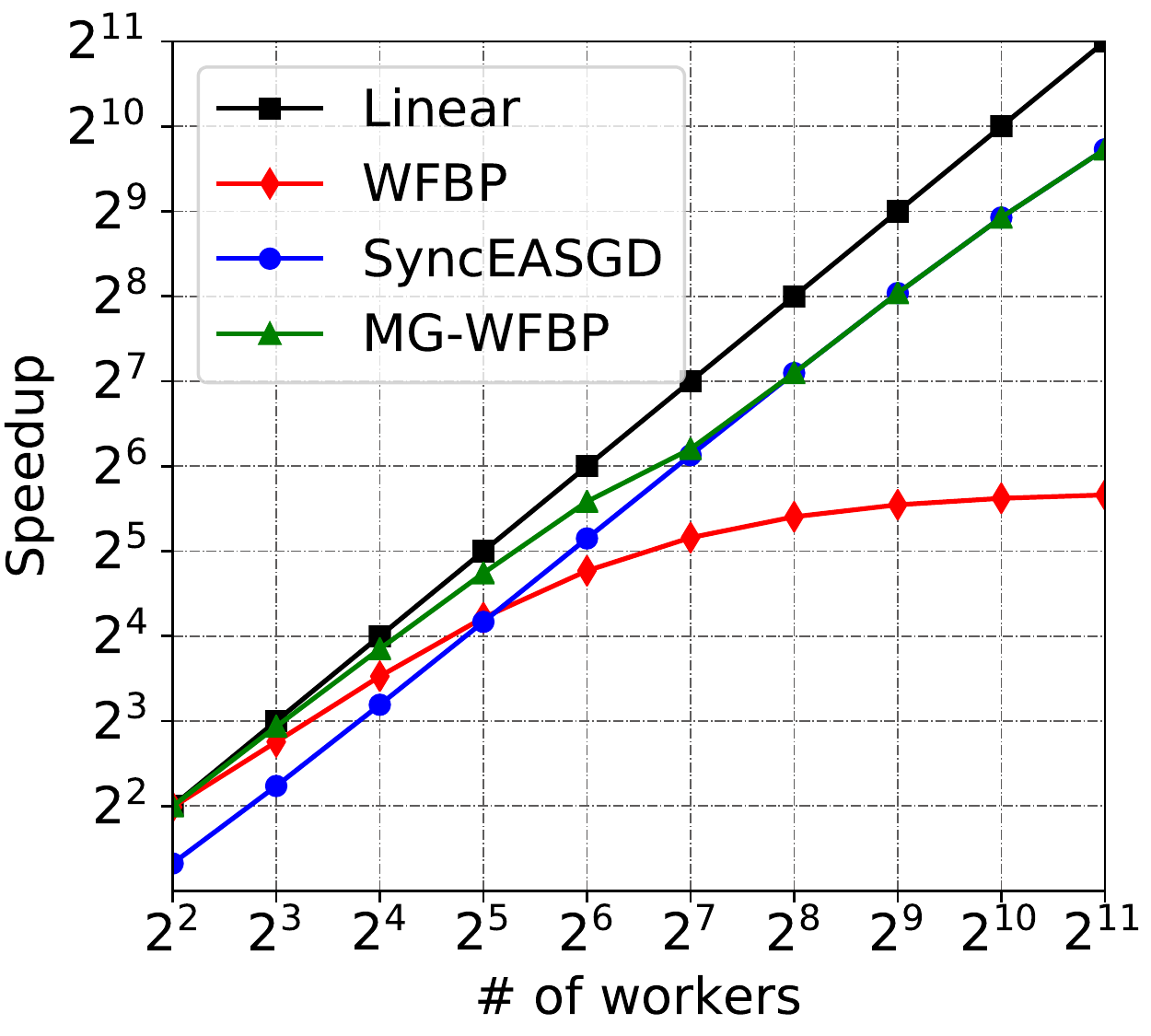}
		\caption{GoogleNet}
	\end{subfigure}
    \begin{subfigure}{0.24\textwidth}
		\includegraphics[width=\linewidth]{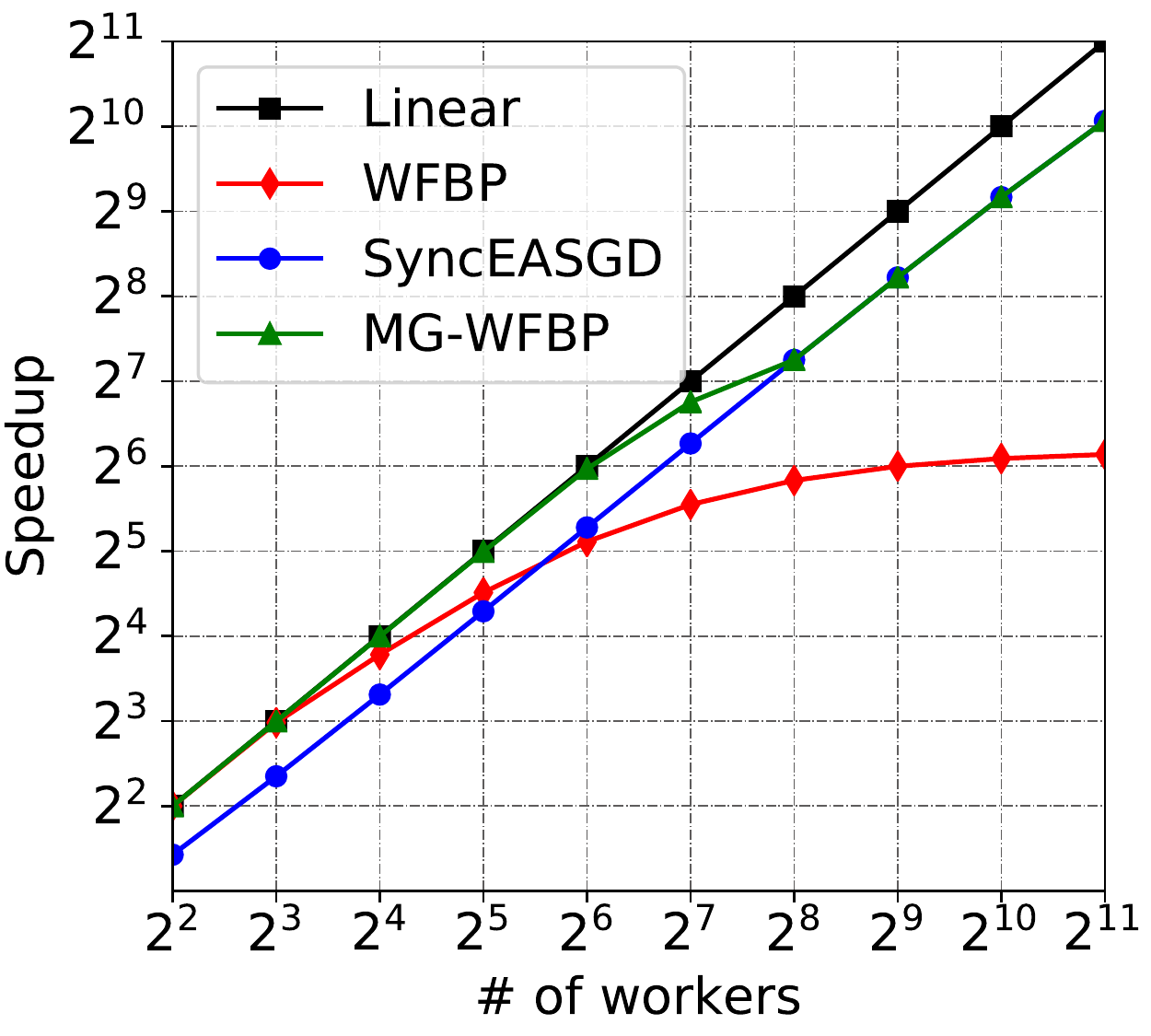}
		\caption{ResNet-50}
	\end{subfigure}
	\caption{The performance comparison on the simulated K80 cluster connected with 10GbE with the ring-based all-reduce algorithm. Baseline of the speedup of SGD is on a single K80 card.}
	\label{fig:simspeedupk80}
\end{figure}
Due to the hardware limitation, we do not have a very large GPU cluster to support more large-scale experiments. So we conduct simulations based on the real single-GPU performance and the network performance model. First, we measure the layer-wise backward propagation time (i.e., the computation time $t_b^{(l)}$, $l=1,2,...,L$) of GoogleNet and ResNet-50 on a single K80 GPU. Second, to estimate the parameters of $a$ and $b$ in the communication model of Eq.~\ref{equ:tcomm}, we exploit the fitted parameters shown in Fig.~\ref{fig:commoverhead} on a K80 GPU cluster connected with 10GbE. Based on the measured layer-wise backward propagation time on the real K80 GPU and the communication model on the K80 GPU cluster, we simulate WFBP, SyncEASGD and MG-WFBP by scaling from 4 workers to 2048 workers with the ring-based and double binary tree based all-reduce algorithms, which have been practically implemented by NCCL.

\textbf{Overall Performance with ring-based all-reduce}. We simulate to train GoogleNet and ResNet-50 by scaling from 4 workers to 2048 workers. The scaling performance is shown in Fig. \ref{fig:simspeedupk80}, in which MG-WFBP has $n=[10, 6, 6, 5, 3, 2, 1,..., 1]$ and $n=[33,19,10,7,5,3,1,...,1]$ merged-gradient layers in GoogleNet and ResNet-50 respectively on the $p=[2^2, 2^3, ..., 2^{11}]$-worker clusters. On the 64-worker cluster, MG-WFBP outperforms WFBP and SyncEASGD by $1.78$x and $1.35$x, respectively on GoogleNet. On ResNet-50, MG-WFBP achieves almost linear speedup, while WFBP and SyncEASGD only have around $55\%$ scaling efficiency. It is important to notice that the curves of WFBP and SyncEASGD have a crossing point in Fig. \ref{fig:simspeedupk80}. This is because the two algorithms are sub-optimal in utilizing the network bandwidth; when the startup time of network communication is not very large (e.g., 4-16 workers in the K80 cluster), WFBP would have the advantage to hide more communication time compared to SyncEASGD. But when scaling to medium-size clusters (e.g., 64 workers), the startup time of communication becomes much larger so that it is hard to be hidden, then using a single-layer communication could become a better approach. As we can see, SyncEASGD achieves better scaling efficiency than WFBP in the 64-worker cluster on both tested CNNs. In such scenarios, MG-WFBP not only overlaps the communication with computation, but also finds the optimal communication message size. So it achieves better scaling efficiency than SyncEASGD and WFBP. Similarly, on training ResNet-50, MG-WFBP achieves about $1.75$x and $1.45$x speedups compared to WFBP and SyncEASGD respectively on the simulated 64-worker cluster. When scaling to large-size clusters (e.g., 256 workers or more), our MG-WFBP converges to the SyncEASGD since the startup time of each layer becomes too large to be hidden, which means that the single-layer communication becomes the optimal. In summary, on the simulated experiments, our proposed algorithm MG-WFBP always achieves the best speedup. However, the ring-based all-reduce algorithm has a startup time that is linear to the number of workers, which makes MG-WFBP become the single-layer communication when scaling to large-scale clusters. 

\textbf{Simulation with double binary trees}. The startup term in ring-based all-reduce is linear to the number of workers, hence it does not perform well in very large clusters. In the recent NCCL releases (from version 2.4), the double binary trees all-reduce algorithm~\cite{sanders2009two} becomes an alternative as it has a logarithmic startup overhead. We replace $a$ and $b$ with the double binary trees algorithm as shown in Table~\ref{table:allreduce} to compare SyncEASGD, WFBP, and MG-WFBP with simulations on 128 to 2048 workers. The results are shown in Fig.~\ref{fig:simspeedupk80-tree}. It can be seen that WFBP and MG-WFBP always outperform SyncEASGD as the startup time of the double binary trees algorithm is relatively small. With the gradient merge solution, MG-WFBP achieves better performance than WFBP by eliminating some layer's startup times. 
\begin{figure}[!ht]
	\centering
    \begin{subfigure}{0.24\textwidth}
		\includegraphics[width=\linewidth]{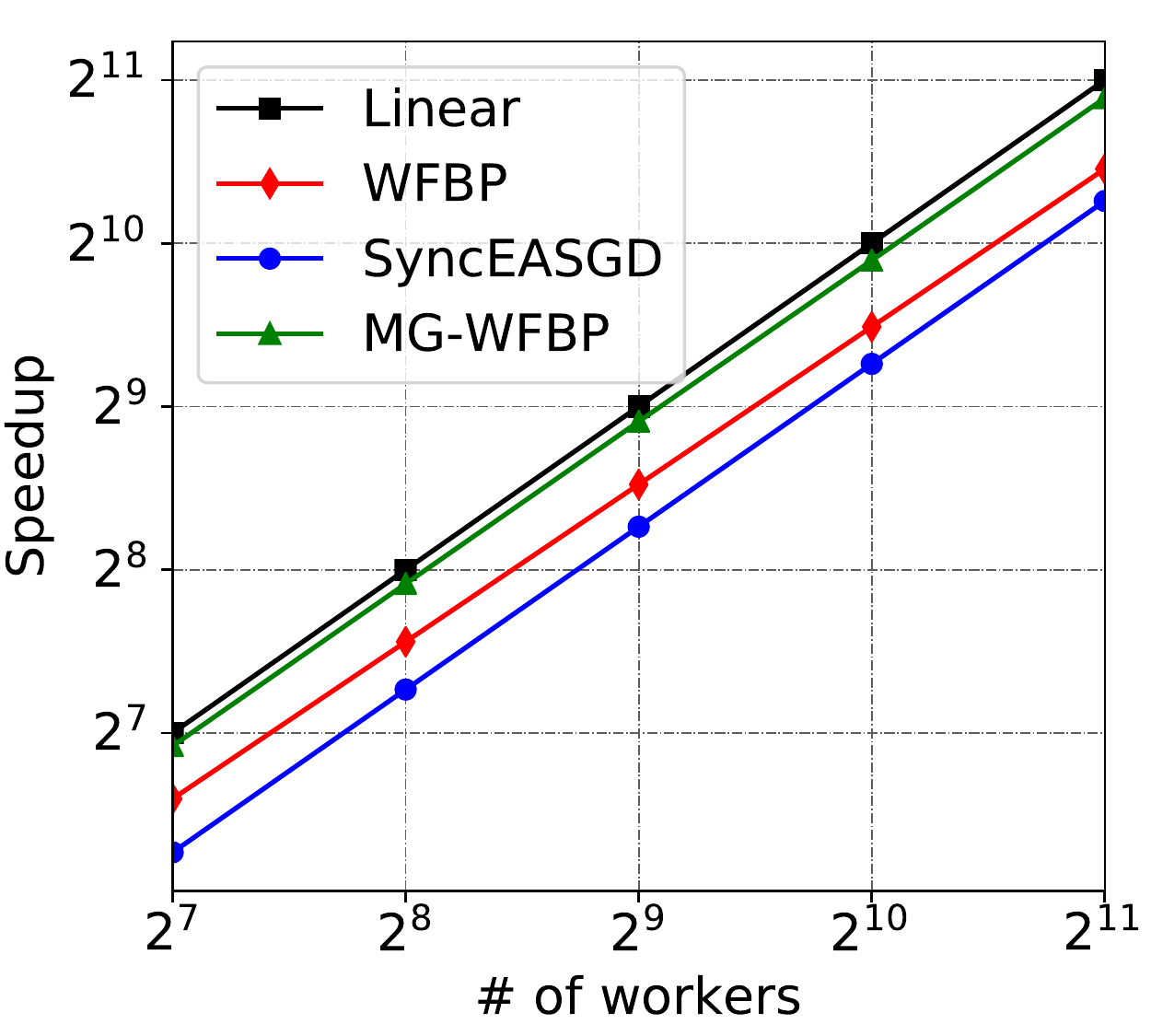}
		\caption{GoogleNet}
	\end{subfigure}
    \begin{subfigure}{0.24\textwidth}
		\includegraphics[width=\linewidth]{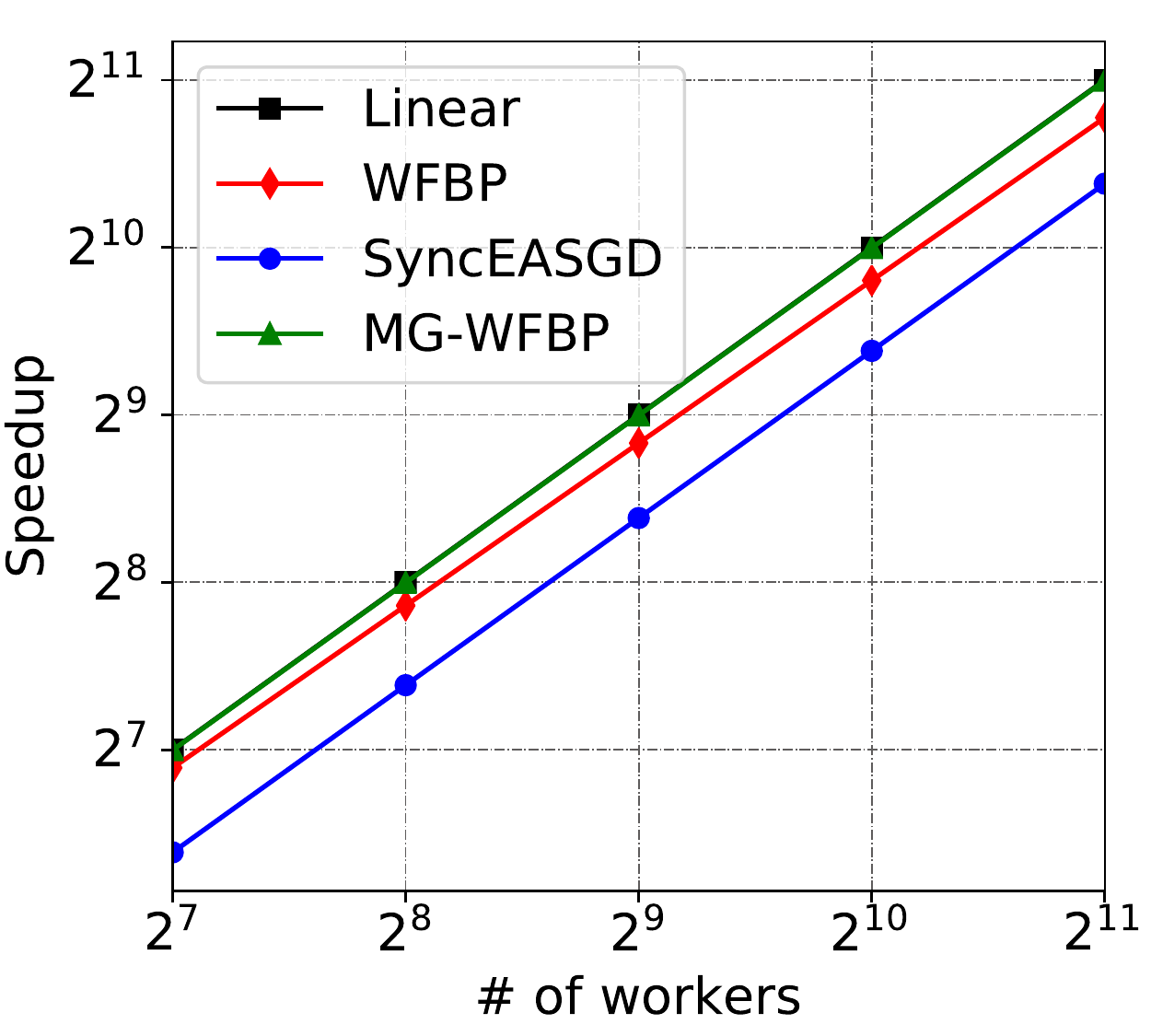}
		\caption{ResNet-50}
	\end{subfigure}
	\caption{The performance comparison on the simulated K80 cluster connected with 10GbE with the double binary trees all-reduce algorithm. }
	\label{fig:simspeedupk80-tree}
\end{figure}

\section{Related Work}\label{s:relatedwork}
The wait-free backward propagation (WFBP) algorithm has recently been proposed to reduce such impact by overlapping communication with computation \cite{awan2017s}\cite{zhang2017poseidon}. In WFBP, the backward computation operations can be started without waiting for the completion of the previous round of data communication. If the communication cost of layer $l+1$ is smaller than the cost of gradients computation of layer $l$, then the communication cost can be completely hidden (except the first layer); and as a result, the scaling efficiency can be close to linear \cite{awan2017s}\cite{zhang2017poseidon}. In practice, however, many DNN models are trained on high-throughput GPUs that result in very short computing time for each backward layer, while it needs to wait for gradient aggregation before starting the next iteration especially on low bandwidth networks (e.g., 10 Gbps Ethernet). 

Current distributed training systems \cite{you2017scaling}\cite{hoefler2010toward}\cite{jia2018highly} exploit tensor fusion that merges small size of gradients before communicating across workers to reduce the communication overhead. The parameter server (PS) method \cite{li2014communication} is proposed for parallelism between computation and communication, but it easily suffers from the communication traffic jam since PS needs to collect the gradients from all the workers. In the centralized framework, Pumma et al. \cite{pumma2017parallel}\cite{pumma2019scalable} provided detailed analysis on the data I/O bottleneck and optimization for large-scale training. Sufficient factor broadcasting (SFB) \cite{zhang2017poseidon} uses the matrix factorization technique to reduce the volume of the data that needs to be communicated for fully connected layers. Although SFB uses P2P communication to eliminate the bandwidth pressure on the PS, it brings a growing number of sufficient factors with both the increasing number of data samples and workers. Zhang et al. \cite{zhang2017poseidon} proposed the Poseidon system with hybrid communication of PS and SFB combined with the WFBP algorithm, and they have achieved 15.5x speedup on 16 single-GPU (TITAN X Pascal) machines. Unfortunately, due to drawbacks of PS and SFB and the communication scheme, Poseidon could also be far away from linear scaling with a large number of workers. 

In the HPC community, the MPI data communication collectives have been redesigned for distributed training to improve the communication performance across multiple machines \cite{awan2017s}. Many MPI-like implementations, such as OpenMPI, NCCL, Gloo\footnote{https://github.com/facebookincubator/gloo} and MVAPICH2-GDR\footnote{https://mvapich.cse.ohio-state.edu/}, support efficient CUDA-aware communication between GPUs via network, and many state-of-the-art DL frameworks (e.g., TensorFlow, PyTorch, Caffe2 and CNTK\footnote{\url{https://docs.microsoft.com/en-us/cognitive-toolkit/}}) integrate NCCL or Gloo for their distributed training modules. Even though these libraries provide very efficient communication collectives, the data communication would still become bottleneck when the communication-to-computation ratio is high, and S-SGD does not scale very well. 

\section{Discussion}\label{s:discission}
Our proposed MG-WFBP is an efficient solution to alleviate the impact of the startup overhead of network communications in distributed DL, but it still has the following limitations: 1) it assumes synchronized SGD with data parallelism, and 2) it requires extra GPU memory (with the same size as model parameters) to buffer the gradients during training. 

The MG-WFBP algorithm mainly considers the scheduling in the backward pass of S-SGD. It would be possible to extend MG-WFBP to a more general scheduling framework. For example, in S-SGD, gradient compression~\cite{lin2018deep,shi2019adistributed} is a promising approach to improving the scalability of distributed DL~\cite{tang2020communication}. To integrate gradient compression with MG-WFBP, one should consider the extra computational overhead of gradient compression (e.g., top-k sparsification~\cite{lin2018deep}) to generate an optimal solution~\cite{shi2020communication}. Furthermore, it is also possible to pipeline the communications and feed-forward computations so that some communication overheads can be hidden during the feed-forward pass~\cite{bao2020preemptive}. It could be more challenging and useful by considering both feed-forward and backward passes to achieve an optimal gradient merge solution. We will leave this as our future work.

\section{Conclusion}\label{s:conclusion}
In this work, we first showed that existing state-of-the-art communication strategies, say wait-free backward propagation (WFBP) and single-layer communication (SyncEASGD), are sub-optimal in the synchronized distributed deep learning training when the communication-to-computation ratio is high. Then we generalized the communication problem in pipelining communication and computation as an optimization problem and developed an optimal solution with an efficient algorithm. We then proposed the merged-gradient wait-free backward propagation (MG-WFBP) strategy by optimally merging gradients. We implemented MG-WFBP atop the popular deep learning framework PyTorch. Our implementation is also publicly available. Through extensive experiments on three 16-GPU clusters including Nvidia Tesla K80 GPUs with 10Gbps Ethernet connection and Nvidia Tesla V100 GPUs with both 10Gbps Ethernet and 56Gbps InfiniBand, we verified that MG-WFBP can achieve much better scalability than WFBP and SyncEASGD on various popular convolutional neural networks. Simulations were also studied to further explore the advantage of MG-WFBP on large-scale clusters.

\section*{Acknowledgments}
The research was supported in part by Hong Kong RGC GRF grants under the contracts HKBU 12200418, HKUST 16206417 and 16207818, as well as an RGC CRF grant under the contract C7036-15G. We would also like to thank NVIDIA for providing the GPU clusters for experiments.

\bibliographystyle{IEEEtran}
\Urlmuskip=0mu plus 1mu
%\bibliography{cites}
\bibliography{merged_gradients_tpds.bbl}
\vspace{-20pt}
\begin{IEEEbiography}[{\includegraphics[width=1in,height=1.25in,clip,keepaspectratio]{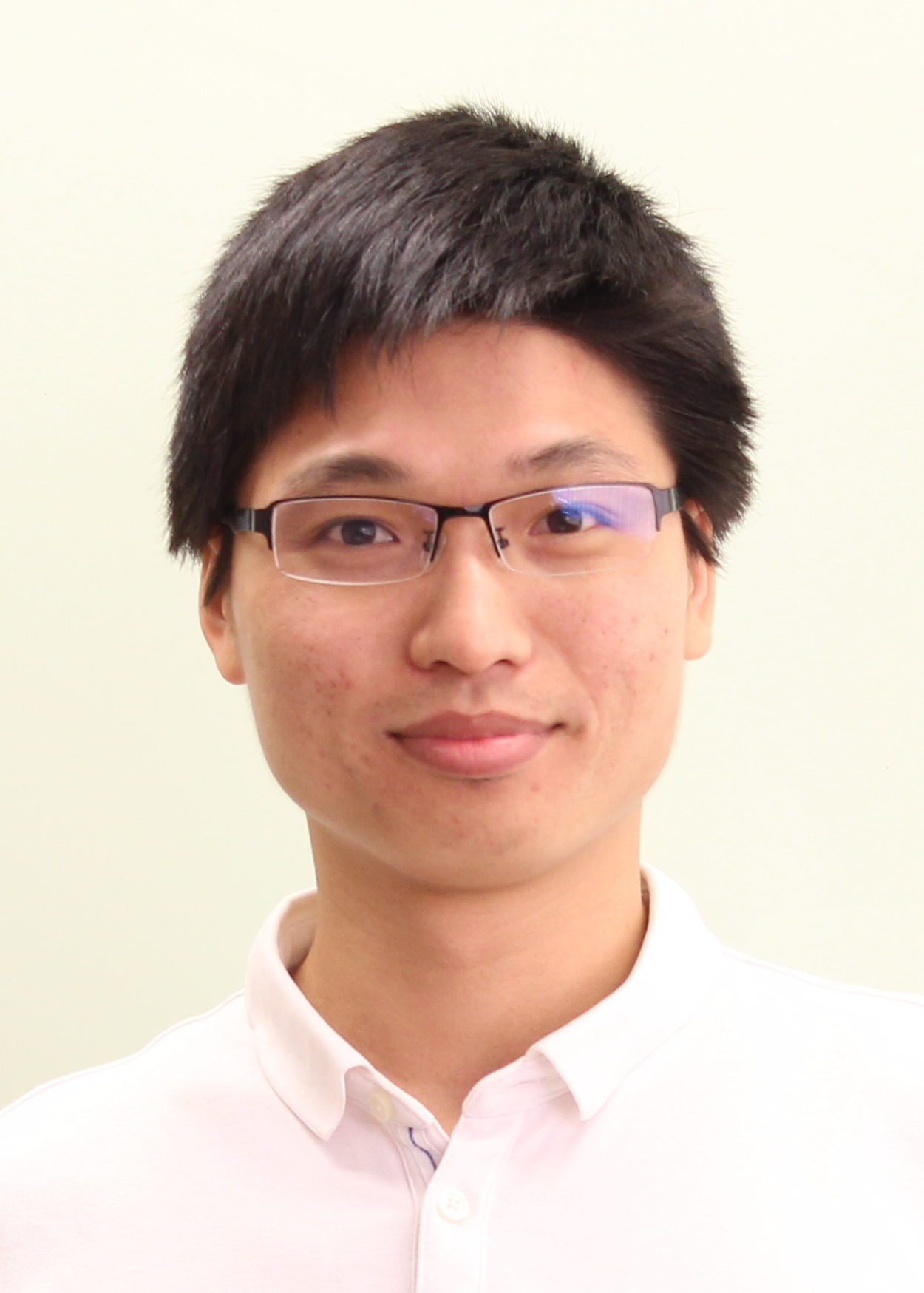}}]{Shaohuai Shi} received a B.E. degree in software engineering from South China University of Technology, P.R. China, in 2010, an MS degree in computer science from Harbin Institute of Technology, P.R. China in 2013, and a Ph.D. degree in computer science from Hong Kong Baptist University in 2020. He is currently a research assistant professor in the Department of Computer Science and Engineering at the Hong Kong University of Science and Technology. His research interests include GPU computing and machine learning systems. He is a member of the IEEE.
\end{IEEEbiography}
\vspace{-35pt}
\begin{IEEEbiography}[{\includegraphics[width=1in,height=1.25in,clip,keepaspectratio]{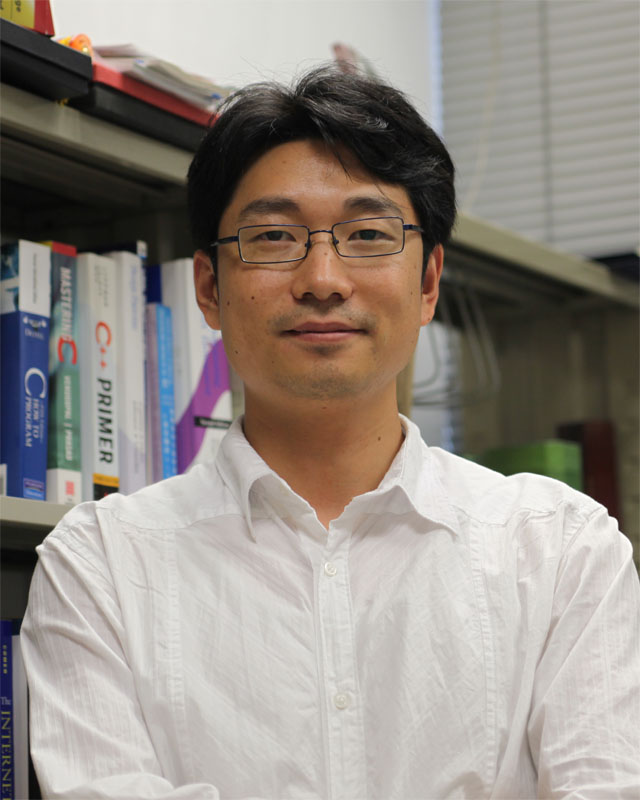}}]{Xiaowen Chu} received the B.E. degree in computer science from Tsinghua University, P.R. China, in 1999, and the Ph.D. degree in computer science from The Hong Kong University of Science and Technology in 2003. Currently, he is a full professor in the Department of Computer Science, Hong Kong Baptist University. His research interests include parallel and distributed computing, cloud computing and wireless networks. He is serving as an Associate Editor of IEEE Access and IEEE Internet of Things Journal. He is a senior member of the IEEE.
\end{IEEEbiography}
\vspace{-35pt}
\begin{IEEEbiography}[{\includegraphics[width=1in,height=1.25in,clip,keepaspectratio]{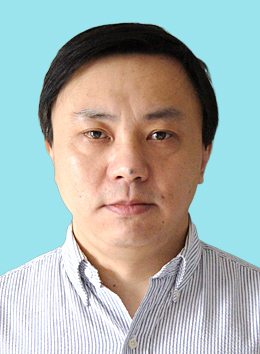}}]{Bo Li} is a professor in the Department of Computer Science and Engineering, Hong Kong University of Science and Technology. He holds the Cheung Kong chair professor in Shanghai Jiao Tong University. Prior to that, he was with IBM Networking System Division, Research Triangle Park, North Carolina. He was an adjunct researcher with Microsoft Research Asia-MSRA and was a visiting scientist in Microsoft Advanced Technology Center (ATC). He has been a technical advisor for China Cache Corp. (NASDAQ CCIH) since 2007. He is an adjunct professor with the Huazhong University of Science and Technology, Wuhan, China. His recent research interests include: large-scale content distribution in the Internet, Peer-to-Peer media streaming, the Internet topology, cloud computing, green computing and communications. He is a fellow of the IEEE for “contribution to content distributions via the Internet”. He received the Young Investigator Award from the National Natural Science Foundation of China (NSFC) in 2004. He served as a Distinguished lecturer of the IEEE Communications Society (2006-2007). He was a corecipient for three Best Paper Awards from IEEE, and the Best System Track Paper in ACM Multimedia (2009).
\end{IEEEbiography}

\end{document}